\documentclass[final]{article}
\pdfpagewidth=8.5in
\pdfpageheight=11in

\usepackage{ijcai19}

\newif\iftechrep
\techreptrue

\iftechrep
\pagestyle{plain} 
\fi

\usepackage{times}
\usepackage{soul}
\usepackage{url}
\urlstyle{same}
\usepackage[hidelinks]{hyperref}
\usepackage[utf8]{inputenc}
\usepackage[small]{caption}
\usepackage{graphicx}
\usepackage{amsmath,amssymb,amsfonts}

\usepackage{booktabs}
\usepackage{algorithm}
\usepackage[noend]{algorithmic}
\usepackage[inline]{enumitem}
\usepackage{microtype}
\usepackage{array}
\usepackage{xspace}

\makeatletter
\def\BState{\State\hskip-\ALG@thistlm}
\makeatother

\usepackage[thmmarks,amsmath,thref]{ntheorem}


\theoremseparator{.}
\theorembodyfont{\itshape}

\newtheorem{lemma}{Lemma}

\newtheorem{claim}{Claim}


\theoremseparator{.}
\theorembodyfont{\upshape}
\theoremsymbol{\mbox{$\triangleleft$}}
\newtheorem{definition}{Definition}


\theorembodyfont{\upshape}
\theoremsymbol{\mbox{$\triangleleft$}}
\newtheorem{example}{Example}


\theoremstyle{nonumberplain}
\theoremheaderfont{\scshape}
\theorembodyfont{\upshape}
\theoremsymbol{\mbox{$\square$}}
\newtheorem{proof}{Proof}

\usepackage{thm-restate} 

\usepackage{tikz}
\usepackage{pgfplots}
\usetikzlibrary{patterns}
\usetikzlibrary{backgrounds}
\usetikzlibrary{calc}
\usetikzlibrary{positioning,fit}
\usepgfmodule{plot}

\usepackage{courier}
\usepackage{listings}
\lstset{basicstyle=\scriptsize\ttfamily\upshape,breaklines=true,frame=tb}


\usepackage{misc0} 



\newcommand{\ontop}{\textit{Ontop}\xspace}
\newcommand{\ontoprov}{\textit{OntoProv}\xspace}

\renewcommand{\O}{\mathcal{O}}
\newcommand{\M}{\mathcal{M}}
\renewcommand{\S}{\mathcal{S}}

\newcommand{\D}{\mathcal{D}}

\newcommand{\PR}{\mathit{PR}}




\newcommand{\dataInstance}{\ensuremath{\Dmc}\xspace}
\newcommand{\assertionName}{\ensuremath{E}\xspace}
\newcommand{\obdaSpecification}{\ensuremath{\Pmc}\xspace}

\newcommand{\database}{\ensuremath{\Dmc}\xspace}
\newcommand{\mapping}{\ensuremath{\Mmc}\xspace}
\newcommand{\dataSchema}{\ensuremath{\Smc}\xspace}
\newcommand{\p}[2]{\ensuremath{{\sf prov}_{#1}({#2})}\xspace}
\newcommand{\tr}[2]{\ensuremath{{\sf Tr}({#1},{#2})}\xspace}
\newcommand{\map}{\ensuremath{{\pi}}\xspace}
\renewcommand{\rule}{\ensuremath{{\rho}}\xspace}
\newcommand{\pair}[2]{\ensuremath{({#1},{#2})}}
\newcommand{\polynomial}{\ensuremath{p}\xspace}
\newcommand{\semiring}{\ensuremath{\mathbb{K}}\xspace}
\newcommand{\semiringVariable}{\ensuremath{p}\xspace}
\newcommand{\semiringVariables}{\ensuremath{\NV}\xspace}

\newcommand{\sfm}{\ensuremath{{\sf m}}\xspace}
\newcommand{\pminus}[2]{\ensuremath{#1_{| #2}}\xspace}
\newcommand{\pplus}[2]{\ensuremath{{#1}\otimes {#2}}\xspace}
\newcommand{\pplusidem}[2]{\ensuremath{{#1}^{#2}}\xspace}

\newcommand{\perfectRef}{\ensuremath{{\sf PerfectRef}}\xspace}
\newcommand{\perfectRefIdem}{\ensuremath{{\sf PerfectRef}^{\star}}\xspace}
\newcommand{\computeProv}{\ensuremath{{\sf ComputeProv}}\xspace}
\newcommand{\marked}{marked\xspace}

\renewcommand{\oplus}{\ensuremath{+}\xspace}
\renewcommand{\otimes}{\ensuremath{\times}\xspace}


\begin{document}

\iftechrep
\title{Enriching Ontology-based Data Access with Provenance\\(Extended Version)}
\else
\title{Enriching Ontology-based Data Access  with Provenance}
\fi
\author{Diego Calvanese$^1$\and Davide Lanti$^1$\and Ana Ozaki$^1$\and Rafael
 Pe\~naloza$^2$\And Guohui Xiao$^1$
 \affiliations
 $^1$KRDB Research Centre,
 Free University of Bozen-Bolzano, Italy \\
 $^2$University of Milano-Bicocca, Italy}

\maketitle

\begin{abstract}
  Ontology-based data access (OBDA) is a popular paradigm for querying
  heterogeneous data sources by connecting them through mappings to an
  ontology.  In OBDA, it is often difficult to reconstruct why a tuple occurs in
  the answer of a query.  We address this challenge by enriching OBDA with
  provenance semirings, taking inspiration from database theory.  In
  particular, we investigate the problems of \textit{(i)}~deciding whether a
  provenance annotated OBDA instance entails a provenance annotated conjunctive
  query, and \textit{(ii)}~computing a polynomial representing the provenance
  of a query entailed by a provenance annotated OBDA instance.  Differently
  from pure databases, in our case these polynomials may be infinite.  To
  regain finiteness, we consider
  idempotent semirings, and study the complexity in the case of \DLLITE
  ontologies.  We implement Task~\textit{(ii)} in a state-of-the-art OBDA
  system and show the practical feasibility of the approach through an
  extensive evaluation against two popular benchmarks.
\end{abstract}


\section{Introduction}

Ontology-based data access (OBDA) \cite{XCKL*18} is by now a popular paradigm
which has been developed in recent years to overcome the difficulties in
accessing and integrating legacy data sources.  In OBDA, users are provided
with a high-level conceptual view of the data in the form of an ontology that
encodes relevant domain knowledge.  The concepts and roles of the ontology are
associated via declarative mappings to SQL queries over the underlying
relational data sources.  Hence, user queries formulated over the ontology can
be automatically rewritten, taking into account both ontology axioms and
mappings, into SQL queries over the sources.

When issuing a query,
in many settings it is crucial to know not only its result
but also \emph{how} it was produced, \emph{how many} different ways there are
to derive it, or how \emph{dependent} it is on certain parts of the data
\cite{DBLP:journals/sigmod/Senellart17,DBLP:journals/ws/ZimmermannLPS12,DBLP:conf/semweb/BunemanK10}.
To address these issues, which are of importance already
for plain relational database management systems (RDBMSs), \emph{provenance
 semirings} \cite{Green07-provenance-seminal,GreenT17} were introduced as an
abstract tool to record and track provenance information; that is, to keep
track of the specific database tuples that are responsible for deriving an
answer tuple, and of additional information associated to them.
In OBDA, determining provenance is made even more challenging by the fact that
answers are affected by implicit consequences derived through ontology axioms,
and by the use of mappings.  Such elements come indirectly into play in query
rewriting, hence provenance information must be reconstructed from the
rewritten queries used in the answering process \cite{BoCR08b}.

In this work, we start from the semiring approach introduced for RDBMSs, and
extend it to the full-fledged OBDA setting.  To do so, we assume that not only
database tuples are annotated with a label representing provenance information
(e.g., the data source or the relation in which the tuple is stored), but also
mappings and ontology axioms.  Then, our task is to derive which combinations
of these labels lead to the answer of a query.
Such information is expressed through a \emph{provenance polynomial}, as
illustrated in the following example.
\begin{example}\label{exa:mapping}
  Let $\ex{Mayors}[\ex{Person},\ex{City}]$ be a database relation with the
  tuples $(\ex{Renier}, \ex{Venice})$ and $(\ex{Brugnaro},\ex{Venice})$,
  annotated with (sources) $p$ and $q$, respectively. Assume two mappings
  $\ex{City}(Y) \leftarrow \ex{Mayors}(X,Y)$ and
  $\ex{headGov}(X,Y) \leftarrow \ex{Mayors}(X,Y)$, annotated with $m$ and~$n$,
  respectively.
  The mappings and the database \emph{induce}
  \begin{enumerate*}[label=\textit{(\roman*)}]
  \item two times the DL assertion $\ex{City}(\ex{Venice})$, one annotated with
    $p \times m$ and one with $q \times m$,
  \item the DL assertion $\ex{headGov}(\ex{Renier},\ex{Venice})$, annotated
    with $p \times n$, and
  \item the assertion $\ex{headGov}(\ex{Brugnaro},\ex{Venice})$, annotated with
    $q \times n$.
  \end{enumerate*}

  Now consider the inclusion $\exists \ex{headGov}\sqsubseteq \ex{Mayor}$
  annotated with $s$.  The answer $\ex{true}$ to the Boolean conjunctive query
  $\exists x. (\ex{Mayor}(x))$ can be derived using this inclusion and any of
  the last two DL assertions.  This information can be expressed through the
  provenance polynomial $((p \times n) \oplus (q \times n)) \otimes s$.
\end{example}
%
%
In our OBDA setting,
concept and role inclusions of the ontology affect query results, as
illustrated in Example~\ref{exa:mapping}.  By annotating the inclusions and the
mappings, in addition to the tuples, we can distinguish which inclusions and
mappings were involved in the derivation of a query result.  This differs from
the approach proposed for attributed \DLLITE \cite{bourgauxozaki}, where the
inclusions are used to express constraints on the provenance information.


We investigate the problems of
\begin{enumerate*}[label=\textit{(\roman*)}]
\item deciding whether a provenance annotated OBDA instance entails a
  provenance annotated conjunctive query (CQ), and
\item computing a provenance polynomial of a CQ entailed by a provenance
  annotated OBDA instance.
\end{enumerate*}
Differently from plain databases, in our case these polynomials may be infinite.
To regain finiteness, we consider
idempotent semirings, and study the complexity for 
\DLLITE ontologies \cite{dl-lite}.  We implement task~\textit{(ii)} in the
state-of-the-art OBDA system \ontop \cite{CCKK*17}, and show the practical
feasibility of our approach through a detailed
evaluation against two popular benchmarks.

\iftechrep
This article is an extended version of \cite{provenance-IJCAI-2019},
with selected proofs and additional information provided in an appendix.
\else
An extended version of this work
is available as a technical report~\cite{ourtechreport-arXiv-2019}.
\fi

\section{Basic Definitions}
\label{sec:prelim}

We represent the provenance information
via a \emph{positive algebra provenance semiring} (or \emph{provenance
 semiring} for short),
originally introduced for databases \cite{Green07-provenance-seminal}.
Given a countably infinite set $\semiringVariables$ of \emph{variables}, the provenance semiring is the algebra
$\semiring=(  \mathbb{N}[\semiringVariables], \oplus ,\otimes,0,1)$,
where $\mathbb{N}[\semiringVariables]$ denotes the space of polynomials with
coefficients in $\mathbb{N}$ and variables in $\semiringVariables$, the product
$\otimes$ and the addition $\oplus$ are two commutative and associative binary
operators over $\mathbb{N}[\semiringVariables]$, and~$\otimes$ distributes
over~$\oplus$.  
A \emph{monomial} from
\semiring is a finite product of variables in $\semiringVariables$.
\NM and \NPr denote the sets of all monomials from
\semiring, and of all finite sums of monomials in \NM, respectively;
%
i.e., \NPr contains only polynomials of the form $\sum_{1\leq i \leq n}\prod_{1\leq j_i \leq m_i}a_{i,j_i}$,
with $a_{i,j_i}\in \semiringVariables$, and $n,m_i>0$. 
Since all coefficients
are in $\mathbb{N}$,  they disappear in this \emph{expanded form}; e.g., $2a$ is $a+a$. 
A polynomial in expanded form is a finite sum of  monomials, each formed by a finite
product of variables. By distributivity, every
polynomial can be equivalently rewritten in expanded form; however, the
expanded form of a polynomial
may become exponentially larger.
By our definitions, $\NV\subseteq \NM\subseteq \NPr$.

\mypar{Annotated OBDA}
The provenance information of 
each axiom in an ontology, each mapping, and each tuple in a data source,
is stored as an annotation.
%
For this paper, we consider the standard OBDA setting with ontologies written
in 
{\DLLITE} \cite{dl-lite}, standard  relational databases as data sources,
and mappings given by 
GAV rules.
Consider three mutually disjoint countable sets of
\emph{concept names} \NC, \emph{role names} \NR, and \emph{individual names} \NI.  
Assume that these sets are also disjoint from \NV.
\DLLITE \emph{role} and \emph{concept inclusions} are expressions of the form
$S\sqsubseteq T$ and $B\sqsubseteq C$, respectively, where $S$, $T$ are role
expressions and $B$, $C$ are concept expressions built through the grammar
rules
\[
  S::=R\mid R^-,\ 
  T::= S\mid \neg S, \ 
  B::= A\mid \exists S, \ 
  C::= B\mid \neg B,
\]
with $R\in \NR$ and $A\in \NC$.
A \DLLITE \emph{axiom} is a \DLLITE role or  concept inclusion.
An \emph{  annotated} \DLLITE \emph{ontology} 
is a finite set of \emph{annotated axioms} of the form
$\pair{\alpha}{\polynomial}$, where $\alpha$ is a \DLLITE axiom
and $\polynomial\in\NM$. 

A \emph{schema} \dataSchema is a finite set of predicate symbols
disjoint from $\NC\cup\NR$ 
  with ${\sf ar}(P)$   the arity of $P\in\dataSchema$.
 An  \emph{  annotated data instance} \dataInstance over \dataSchema maps every
$P\in \dataSchema$  to a finite subset $P^\dataInstance$ of
$\NI^{{\sf ar}(P)}\times \semiringVariables$.
 An \emph{  annotated mapping} 
is a finite set of \emph{annotated rules} 
$\pair{\rule}{\polynomial}$,
where    $\rule$ is a (GAV) rule and $\semiringVariable\in\semiringVariables$.
%
%
A rule $\rho$ is
of the form $\assertionName(\vec{x})\leftarrow\varphi(\vec{x},\vec{y},\vec{z})$,
with $\assertionName\in\NC\cup\NR$ and $\varphi(\vec{x}, \vec{y},\vec{z})$
a conjunction of atoms $P(\vec{t},t)$,
with $P\in\dataSchema$, $\vec{t}$   an ${\sf ar}(P)$-tuple
of terms in $\vec{x}\cup \vec{y}$, and
$t\in\vec{z}$.
We restrict $\varphi$ to a conjunction of atoms for simplicity of our
theoretical development, also in line with the idea that semirings capture the
provenance of positive queries \cite{Green07-provenance-seminal}.  See
Sec.~\ref{sec:evaluation} for handling arbitrary OBDA mappings in our
implementation.

An \emph{annotated OBDA specification} $\Pmc$ is a triple
$(\Omc,\Mmc,\dataSchema)$, where \Omc is an ontology with annotated axioms,
\dataSchema is a data source schema whose signature is disjoint from the
signature of \Omc, and \Mmc is a set of annotated mappings, connecting
\dataSchema to \Omc \cite{XCKL*18}.
The pair $(\obdaSpecification,\dataInstance)$ of an annotated OBDA
specification \obdaSpecification 
and an annotated data instance
\dataInstance is an \emph{annotated OBDA instance}.
In OBDA, data sources and mappings induce virtual
assertions. In annotated OBDA,
virtual assertions are annotated with the provenance
information of the mapping and of matching tuples in the data instance.
Formally,
an \emph{annotated assertion} $(\assertionName(\vec{a}),p)$  is an expression
of the form $(A(a),p)$
or $(R(a,b),p)$, with $A\in \NC$, $R\in \NR$, $a,b\in \NI$, and $p\in\NM$.
 We write $ \varphi(\mu(\vec{x},\vec{y},\vec{z}))\subseteq \dataInstance$
 if $\mu$  is  a function  mapping  $\vec{x},\vec{y}$
 to \NI, $\vec{z}$ to \semiringVariables,  
 and $(\mu(\vec{t},t))\in P^\dataInstance$\negmedspace,
for every atom $P(\vec{t},t)$ in $\varphi(\vec{x},\vec{y},\vec{z})$.
Given an annotated mapping  \Mmc and 
data instance   \dataInstance,
the set $\mapping(\dataInstance)$ of  annotated assertions
\[
  \textstyle
  (\assertionName(\mu(\vec{x})), \ p\otimes   \prod_{z\in\vec{z}}\mu(z)),
  \text{ satisfying }
\]
$(\assertionName(\vec{x})\leftarrow\varphi(\vec{x},\vec{y},\vec{z}),\
p)\in\Mmc$ and $ \varphi(\mu(\vec{x},\vec{y},\vec{z}))\subseteq \dataInstance$
is the set of \emph{virtual annotated assertions} for \Mmc over \dataInstance.


The semantics of annotated OBDA instances is based on interpretations
over the signature of the ontology,
extending classical \DLLITE interpretations
to track provenance, when relevant.
An \emph{annotated interpretation} is a triple
$\Imc=(\Delta^\Imc,\Delta^\Imc_\sfm,\cdot^\Imc)$
where $\Delta^\Imc$ and $\Delta^\Imc_\sfm$ are non-empty disjoint sets (called the
\emph{domain} of \Imc and the
\emph{domain of monomials} of \Imc, respectively),
and $\cdot^\Imc$ is the \emph{annotated interpretation function} mapping
\begin{itemize}[noitemsep]
\item every $a\in\NI$ to some $a^\Imc\in\Delta^\Imc$;
\item  every $p,q\in \NM$
to some $p^\Imc,q^\Imc\in\Delta^\Imc_\sfm$
s.t.\ $p^\Imc=q^\Imc$ iff the
monomials $p$ and $q$ are mathematically equal
(modulo associativity and commutativity,
e.g., $(p\otimes q)^\Imc =  (q\otimes p)^\Imc$
by commutativity);
\item   every $A\in\NC$ to
some $A^\Imc\subseteq \Delta^\Imc\times \Delta^\Imc_\sfm$; and
\item every   $R\in\NR$ to
some $R^\Imc\subseteq \Delta^\Imc\times\Delta^\Imc\times \Delta^\Imc_\sfm$.
\end{itemize}
We extend
$\cdot^\Imc$ to further \DLLITE expressions as natural:
\[
  \begin{array}{rcl}
    (R^-)^\Imc &=& \{(e,d,p^\Imc)\mid (d,e,p^\Imc)\in R^\Imc\} \text{ ,}\\
    (\neg S)^\Imc &=&
    (\Delta^\Imc\times\Delta^\Imc\times \Delta^\Imc_\sfm)\setminus S^\Imc, \\
    (\exists S)^\Imc &=&
    \{(d,p^\Imc)\mid \exists e\in\Delta^\Imc:(d,e,p^\Imc)\in S^\Imc\} \text{ ,
     and}\\
    (\neg B)^\Imc &=& (\Delta^\Imc \times \Delta^\Imc_\sfm)\setminus B^\Imc.
  \end{array}
\]
The annotated interpretation \Imc \emph{satisfies}:
\[
  \begin{array}{ll}
    \pair{A(a)}{p}, &\text{if } (a^\Imc,p^\Imc)\in A^\Imc;\\
    \pair{R(a,b)}{p}, &\text{if } (a^\Imc,b^\Imc,p^\Imc)\in R^\Imc;\\
    \pair{B\sqsubseteq C}{p}, &\text{if, for all } q\in\NM,
    (d,q^\Imc)\in B^\Imc\\
    & \qquad\text{implies that } (d,(q\otimes p)^\Imc)\in C^\Imc; \text{ and}\\
    \pair{S\sqsubseteq T}{p}, &\text{if, for all } q\in\NM,
    (d,e,q^\Imc)\in S^\Imc\\
    & \qquad\text{implies that } (d,e,(q\otimes p)^\Imc)\in T^\Imc.
  \end{array}
\]
\Imc satisfies an annotated ontology $\Omc$, in symbols $\Imc\models\Omc$, if
it satisfies all annotated axioms in \Omc.  \Imc satisfies an annotated OBDA
instance $((\Omc,\mapping,\Smc), \dataInstance)$ if $\Imc\models\Omc$ and
$\Imc\models \mapping(\dataInstance)$.

\begin{example}
Consider the OBDA instance of Example~\ref{exa:mapping}
and an annotated interpretation \Imc with 
$\Delta^\Imc = \{ \ex{Renier}, \ex{Venice},\ex{Brugnaro}\}$,
$\Delta^\Imc_\sfm $ containing
${p\otimes n}$, ${q\otimes n}$, ${p\otimes m}$,
${q\otimes m}$, ${p\otimes n\otimes s}$, ${q\otimes n \otimes s}$, with
such individuals and monomials interpreted by themselves, and
\[
  \begin{array}{rcl}
    \ex{headGov}^\Imc &=&
    \{(\ex{R},\ex{V},p\otimes n),(\ex{B}, \ex{V},q\otimes n)\},\\
    \ex{Mayor}^\Imc &=&
    \{(\ex{R},p\otimes n \otimes s),(\ex{B},q\otimes n\otimes s)\},\\
    \ex{City}^\Imc &=& \{(\ex{V},p\otimes m),(\ex{V},q\otimes m)\}.
  \end{array}
\]
\Imc is a model of such OBDA instance,
where  $\ex{R}$, $\ex{V}$, and $\ex{B}$ stand for $\ex{Renier}$, $\ex{Venice}$,
and $\ex{Brugnaro}$, respectively.
\end{example}
Following the database approach~\cite{Green07-provenance-seminal,GreenT17}, 
we annotate facts in interpretations with provenance information. 
However, in Green \textit{et al.}'s setting, the database ``is'' the (only) interpretation, 
while in our case we adopt the open world assumption (as in OBDA), so 
the semantics is based on multiple interpretations. 
Our semantics ensures that, if we have a tuple $(d,p^\Imc)\in C^\Imc$ and 
$(C \sqsubseteq D)$ is annotated with $n$, then $(d,(p\times n)^\Imc) \in D^\Imc$. So 
 derivations are also represented in interpretations, and thus can be entailed.
Each derivation is independent of the others.

Regarding the semantics of negation, 
we point out that, at the level of an interpretation, the lack of provenance information is a support for the negation of a fact. 
This apparent counterintuitive behaviour 
 does not hold in all interpretations, 
 hence  
 it does 
 not manifest in the entailments.
In fact, our focus in this paper is \emph{query} entailment (defined next), 
negations  are only defined to comply with the usual 
syntax and semantics of \DLLITE. 
They do not affect query results, as in \DLLITE.
 

\mypar{Annotated Queries}
We extend the notion of conjunctive queries in DLs by allowing binary and ternary predicates,
where the last term of a tuple 
may contain provenance information represented as a monomial
(by definition of the semantics of annotated OBDA instances,
the last element of a tuple can only contain monomials, not sums).
More specifically,
a \emph{Boolean conjunctive query (BCQ)} $q$ is a sentence   
$\exists \vec{x}.\varphi(\vec{x},\vec{a},\vec{p})$,
where $\varphi$ is a conjunction of (non-repeating) atoms of the form $A(t_1,t)$,
$R(t_1,t_2,t)$, and $t_i$ is either an individual name from $\vec{a}$,
or a variable from $\vec{x}$, and $t$ (the last term of each tuple)
is either an element of $\NM$ in the list $\vec{p}$ 
or a variable from $\vec{x}$.
We often write $P(\vec{t},t)$ to refer to an atom which can be either $A(t_1,t)$
or $R(t_1,t_2,t)$ and $P(\vec{t},t)\in q$ if $P(\vec{t},t)$ is an atom occurring in $q$.

A \emph{match} of the BCQ $q=\exists \vec{x}.\varphi(\vec{x},\vec{a},\vec{p})$ in the
annotated interpretation \Imc
is a function  $\map:\vec{x}\cup\vec{a}\cup \vec{p}\to\Delta^\Imc\cup\Delta^\Imc_{\sfm}$,
such that $\map(b)=b^\Imc$, for all $b\in \vec{a}\cup\vec{p}$,
and $\map(\vec{t},t)\in P^\Imc$, for every   $P(\vec{t},t)\in q$.
\Imc satisfies the BCQ $q$, written $\Imc\models q$,
if there is a match of $q$ in \Imc.
A BCQ is \emph{entailed by} an annotated OBDA instance
if it is satisfied by every model of it. 
For a BCQ $q$ and an interpretation \Imc, $\nu_\Imc(q)$ denotes
the set of all matches   of $q$ in \Imc. 
The \emph{provenance} of $q$ on \Imc, denoted $\p{\Imc}{q}$, is the (potentially infinite) expression: 
\begin{align*}
 \textstyle \sum_{\pi\in\nu_\Imc(q)}\prod_{P(\vec{t},t)\in q} \pi^-(t) 
\end{align*}
 where 
 $\pi(t)$
 is the last element of
 the tuple $\map(\vec{t},t)\in P^\Imc$; and
 $\pi^-(t)$ is any $v\in\NM$ s.t.\ $v^\Imc=\pi(t)$.
 For $p\in\NPr$, we write $p\subseteq \p{\Imc}{q}$ if
 $p$ is a sum of monomials and for each occurrence of a monomial in
 $p$ we find an occurrence of it in $\p{\Imc}{q}$.
\Imc \emph{satisfies} $q$ with provenance $p\in\NPr$,
written $\Imc\models (q,p)$, if
    $\Imc\models q$ and $p\subseteq\p{\Imc}{q}$.
The annotated OBDA instance
$(\obdaSpecification,\dataInstance)$ \emph{entails} $q$,
$(\obdaSpecification,\dataInstance)\models q$, if for all annotated interpretations
\Imc,  if $\Imc\models (\obdaSpecification,\dataInstance)$ then $\Imc\models q$; and
$(\obdaSpecification,\dataInstance)\models (q,p)$,
if $(\obdaSpecification,\dataInstance)\models q$ and $p\subseteq\p{\Imc}{q}$,
for all 
 \Imc satisfying $(\obdaSpecification,\dataInstance)$.

In our syntax, the atoms of the queries contain an additional parameter 
which may either be a variable or a monomial. 
As a result, 
one can filter query results based on provenance information by specifying constraints 
in the last parameter of the atoms,
which was not possible in the original 
approach by Green \textit{et al.}~\cite{Green07-provenance-seminal,GreenT17}. 
 For example, $\exists xy. A(x,p)\wedge B(x,y)$ 
can be used to specify that we are only 
interested in matches of the query where the first atom is associated 
with a particular provenance. 
Variables can also be repeated, e.g.   
$\exists xy. A(x,y)\wedge B(x,y)$. 
One can fall back to the original  setting from databases, 
where no constraints are imposed, by simply associating 
the last term of each atom with a fresh variable (see  
  standard queries in Section~\ref{sec:idem}).

The \emph{size} $|X|$   of
  an annotated OBDA instance, a polynomial or a BCQ $X$
is the length of the string that represents $X$.
We assume a binary encoding of
elements of \NC,\NR,\NI and \NPr occurring in $X$.
We may omit `annotated' in front of terms such as `OBDA,'
`queries,' `inclusions,' `assertions,' and others,
 whenever  this is clear from the context.

\mypar{Reasoning Problems}
Annotating OBDA instances with  provenance information   does not impact 
consistency checking. That is, an annotated OBDA instance is satisfiable 
precisely when the OBDA  instance that results from removing the annotations 
is satiafiable.
We thus focus on  the problem of
\emph{query entailment} w.r.t.\ a provenance polynomial:
given an (annotated) OBDA instance $(\obdaSpecification,\dataInstance)$,
a query $q$ and a polynomial $p\in\NPr$
decide if $(\obdaSpecification,\dataInstance)\models (q,p)$.
Another important and related problem is to compute the provenance of a query:
 given an   OBDA instance $(\obdaSpecification,\dataInstance)$ 
and a query $q$, compute
the set of all $p\in\NPr$ such that $(\obdaSpecification,\dataInstance)\models (q,p)$.
In our formalism, the latter problem  depends on whether
there is a finite  set of polynomials which we can compute.
As shown next, in  \DLLITE
the set of provenance polynomials may be infinite.
\begin{example}
\label{ex:multiplication}
Consider an OBDA instance
$(\Pmc,\dataInstance)$ as in Ex.~\ref{exa:mapping}, but where now $\Omc$ of
$\Pmc$
contains also $\pair{\ex{Mayor}\sqsubseteq\exists\ex{headGov}}{t}$.  For all
$i\in\mathbb{N}$, 
$(\Pmc,\dataInstance)\models \pair{\ex{Mayor}(\ex{Renier})}{p\otimes n\otimes
 s^{i+1}\otimes t^i}$.  Indeed, for any model \Imc of $(\Pmc,\dataInstance)$,
$(\ex{Renier},(p\otimes n\otimes s)^\Imc)\in \ex{Mayor}^\Imc$ implies
$(a,(p\otimes  n\otimes s\otimes t)^\Imc)\in (\exists\ex{headGov})^\Imc$,
which implies
$(\ex{Renier},(p\otimes n\otimes s^2\otimes t)^\Imc)\in\ex{Mayor}^\Imc$,
and so on.
\end{example}

In Section~\ref{sec:rewriting}
we consider  the problem of
query entailment w.r.t.\ a provenance polynomial.
Note that in Example~\ref{ex:multiplication},
if the semiring is multiplicatively idempotent (i.e., $s\otimes s=s$), 
the set of provenance polynomials is finite: the only polynomial is $p\otimes n\otimes s\otimes t$.
This is not a coincidence; under multiplicative-idempotency, the set of provenance polynomials is always
finite. The following proposition states that multiplicative-idempotency
is indeed 
sufficient to guarantee a  finite set of polynomials.
\begin{restatable}{proposition}{PropositionFinite}\label{prop:finite}
Under multiplicative idempotency, for any satisfiable   OBDA instance
$(\obdaSpecification,\dataInstance)$ and
 BCQ $q$, the set of polynomials $p\in\NPr$ such that $(\obdaSpecification,\dataInstance)\models (q,p)$ is finite.
\end{restatable}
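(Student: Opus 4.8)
The plan is to reduce the statement to the existence of a single \emph{finite} model. Recall that $(\obdaSpecification,\dataInstance)\models(q,p)$ requires $p\subseteq\p{\Imc}{q}$ for \emph{every} model \Imc of $(\obdaSpecification,\dataInstance)$; hence, fixing any one model $\Imc_0$, every entailed polynomial $p$ must already satisfy $p\subseteq\p{\Imc_0}{q}$. By the definition of $\subseteq$, each monomial occurrence of $p$ is matched by a \emph{distinct} occurrence in $\p{\Imc_0}{q}$, so $p$ is a sub-multiset of $\p{\Imc_0}{q}$. Since a finite multiset has only finitely many sub-multisets, and the entailed $p$ form a subset of these, it suffices to exhibit a model $\Imc_0$ in which $\p{\Imc_0}{q}$ is a \emph{finite} sum of monomials; this holds for any finite model, as then the matches $\nu_{\Imc_0}(q)$ — functions from the finitely many terms of $q$ into $\Delta^{\Imc_0}\cup\Delta^{\Imc_0}_{\sfm}$ — are finitely many.

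First I would fix the finite set $V\subseteq\semiringVariables$ of provenance variables occurring in the annotations of \Omc, \Mmc and \dataInstance, together with those appearing in the monomials of $q$; this is finite because all these objects are finite. Under multiplicative idempotency every monomial over $V$ equals, modulo associativity, commutativity and $s\otimes s = s$, a product of pairwise distinct variables of $V$, so there are at most $2^{|V|}$ distinct monomials over $V$. This finiteness of the monomial space is exactly what will let provenance propagation terminate.

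Next I would build the finite model $\Imc_0$. Starting from the finite set of virtual assertions $\Mmc(\dataInstance)$, I take $\Delta^{\Imc_0}_{\sfm}$ to be the finite set of monomials over $V$ and close the relations under the satisfaction conditions of the axioms of \Omc: a concept or role inclusion annotated with $n$ propagates a fact $(d,w^{\Imc_0})$ to $(d,(w\otimes n)^{\Imc_0})$, while existentials $B\sqsubseteq\exists S$ are realised by reusing already present domain elements rather than by creating fresh witnesses, which is possible since $(\obdaSpecification,\dataInstance)$ is satisfiable and \DLLITE enjoys the finite-model property. Because every propagation step only multiplies an existing monomial by an axiom annotation, all monomials produced stay within the finite set of monomials over $V$; hence the closure ranges over the finite sets of candidate tuples in $\Delta^{\Imc_0}\times\Delta^{\Imc_0}_{\sfm}$ and $\Delta^{\Imc_0}\times\Delta^{\Imc_0}\times\Delta^{\Imc_0}_{\sfm}$, and this monotone closure reaches a fixpoint after finitely many steps, yielding a finite model $\Imc_0$.

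The main obstacle is precisely this finite-model construction, where two things must be controlled at once: existentials have to be satisfiable by reusing finitely many domain elements (this is where satisfiability and the finite-model property of \DLLITE enter), and the accompanying provenance propagation must terminate. The latter is where multiplicative idempotency is indispensable: without it a single fact spawns an infinite family of pairwise distinct monomials, as in Example~\ref{ex:multiplication}, where the annotation $s^{i+1}\otimes t^i$ grows without bound and no finite model exists. Once $\Imc_0$ is in hand, $\p{\Imc_0}{q}$ is a finite sum of monomials and, by the first paragraph, the set of entailed polynomials — all sub-multisets of $\p{\Imc_0}{q}$ — is finite.
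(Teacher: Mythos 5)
Your proof is correct and takes essentially the same route as the paper's: both arguments rest on the two facts that multiplicative idempotency leaves only finitely many monomials over the finitely many annotation variables, and that a satisfiable instance admits a finite (canonical) model $\Imc_0$, whose finite $\p{\Imc_0}{q}$ then bounds the multiplicity of every monomial in an entailed polynomial, with only finitely many sub-multisets remaining. The one slip is cosmetic: $\Delta^{\Imc_0}_{\sfm}$ cannot literally be finite, since $\cdot^{\Imc_0}$ must interpret \emph{all} of $\NM$ injectively modulo mathematical equality; what you actually need (and what your closure construction in fact delivers) is that the concept and role \emph{extensions} are finite, so that $\nu_{\Imc_0}(q)$ is finite even though $\Delta^{\Imc_0}_{\sfm}$ is not.
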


In Section~\ref{sec:idem} we study  idempotent semirings and
 consider the problem of computing
the provenance of a query.


\section{Provenance Annotated Query Entailment}\label{sec:rewriting}  

We establish complexity results for the problem 
of deciding whether an   OBDA  instance  entails 
a (provenance annotated) query. 
For clarity of presentation, we split our proof  in two parts.
We first show that  
for an   OBDA  instance $(\obdaSpecification,\database)$
and a query $(q,p)$, 
there is an  OBDA  instance $(\obdaSpecification_m,\database_m)$
and a set ${\sf Tr}(q_m,p_m)$ of
(non-annotated) queries  
such that $(\obdaSpecification,\database)\models (q,p)$
iff
$(\obdaSpecification_m,\database_m)$    entails some 
   $q'\in{\sf Tr}(q_m,p_m)$. 
   Moreover, the sizes of $(\obdaSpecification_m,\database_m)$ and 
   $q'$ are polynomial in the sizes of 
   $(\obdaSpecification,\database)$
and $(q,p)$. 
Then, we adapt the query rewriting 
  algorithm \perfectRef~\cite{dl-lite} to decide whether 
  $(\obdaSpecification_m,\database_m)\models q'$.

\mypar{Part 1: Characterization}
%
%
Lemma~\ref{lem:not-equal-marked} states that, 
given 
an  OBDA instance $(\obdaSpecification,\database)$ and 
a query $(q,p)$,
there is an  OBDA instance $(\obdaSpecification_m,\database_m)$ 
and  a query $(q_m,p_m)$ that can
be used to decide 
$(\obdaSpecification,\database)\models (q,p)$ 
 and, moreover,  
all monomials in $p_m$ are mathematically 
distinct (modulo associativity, commutativity,  and distributivity). 
\begin{restatable}{lemma}{Lemmanotequalmarked}\label{lem:not-equal-marked}
Given a satisfiable  OBDA instance 
$(\obdaSpecification,\dataInstance)$ and a query $(q,p)$, 
there are 
 $(\obdaSpecification_m,\dataInstance_m)$ and 
   $(q_m,p_m)$
    such that
\begin{itemize}[noitemsep]
\item 
  any two monomials $p_1$, $p_2$ appearing in $p_m$
  are mathematically distinct;
\item $(\obdaSpecification,\dataInstance)\models (q,p)$ iff 
$(\obdaSpecification_m,\dataInstance_m)\models (q_m,p_m)$;  and
\item $|(\obdaSpecification_m, \database_m)|+|(q_m,p_m)|$ is polynomially bounded by  
$|(\obdaSpecification,\database)|+|(q,p)|$. 
\end{itemize}
\end{restatable}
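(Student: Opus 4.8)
The plan is to obtain $(\obdaSpecification_m,\dataInstance_m)$ by \emph{marking}, i.e.\ by renaming, with fresh pairwise distinct variables, the provenance variables attached to the tuples of \dataInstance and to the mappings and axioms of \obdaSpecification, so that no two syntactically different sources ever share a variable. This breaks the coincidences between derivable monomials that stem from shared source variables. The query $q_m$ is then $q$ with every monomial constant occurring in it lifted along the same renaming, and $p_m$ is chosen as a preimage of $p$ that lists pairwise distinct monomials. Throughout I would reason in terms of \emph{matches}: since adding assertions only adds matches and enlarges $\p{\Imc}{q}$, entailment of $(q,p)$ for a satisfiable instance is controlled by the match-minimal (canonical) model, and $p\subseteq\p{\Imc}{q}$ is the sub-multiset condition that each occurrence of a monomial in $p$ is witnessed by a distinct match of $q$.

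The renaming comes with a collapsing substitution $\theta$ that sends every fresh variable back to the variable it was copied from. I would check that $\theta$ behaves as a semiring homomorphism, maps every model of $(\obdaSpecification_m,\dataInstance_m)$ to a model of $(\obdaSpecification,\dataInstance)$, and, conversely, that every model of $(\obdaSpecification,\dataInstance)$ lifts to a marked model. The purpose of the fresh variables is that, in the marked instance, derivations that use different sources can no longer collapse onto the same monomial, so the provenance attached to distinct source-multisets is kept apart. I would then define $p_m$ occurrence by occurrence: enumerating the monomials $m_1,\dots,m_k$ of $p$ (with $k\le|p|$), I assign to the $i$-th occurrence a marked monomial $\mu_i$ with $\theta(\mu_i)=m_i$, choosing the $\mu_i$ along a fixed canonical enumeration of derivations so that they are pairwise distinct and so that $\{\mu_1,\dots,\mu_k\}$ is jointly producible in the canonical marked model exactly when the multiplicities demanded by $p$ are met. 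The equivalence $(\obdaSpecification,\dataInstance)\models (q,p)$ iff $(\obdaSpecification_m,\dataInstance_m)\models (q_m,p_m)$ then follows by transferring witnessing matches through $\theta$ in both directions.

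The step I expect to be the main obstacle is the handling of \emph{occurrences/multiplicities}: I must show that requiring each of the $k$ pairwise distinct monomials of $p_m$ to be witnessed once is equivalent, in every model, to requiring the original monomials of $p$ with their multiplicities. The delicate point is that a single match of $q$ must not be reusable to discharge two different occurrences; here the marking does the work, since two distinct occurrences now demand two distinct marked monomials, which can only arise from derivations over distinct source-multisets. The genuinely subtle case is a coincidence forced by commutativity of the product (e.g.\ two symmetric matches whose monomials agree irrespective of any annotation), which no provenance marking can separate; for these one must argue that the surplus multiplicity is either redundant or is reproduced in $q_m$ by duplicating the relevant atoms, and that monotonicity toward the canonical model guarantees no model has fewer witnesses than the canonical one. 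Finally, the size bound is routine: $k\le|p|$, the number of fresh variables is bounded by the number of annotated sources plus the total degree of the monomials of $p$, and all of these are polynomially bounded and binary-encoded, so $|(\obdaSpecification_m,\dataInstance_m)|+|(q_m,p_m)|$ is polynomial in $|(\obdaSpecification,\dataInstance)|+|(q,p)|$.
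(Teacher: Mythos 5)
Your overall architecture---mark the instance, collapse via a substitution $\theta$ (the paper's $\dagger$), transfer models in both directions, and control entailment through the canonical model---matches the paper's proof. But your marking is too weak, and the step you yourself flag as ``the genuinely subtle case'' is exactly where it fails. Renaming all annotations with fresh pairwise distinct variables (only condition~4 of the paper's marking) does \emph{not} make the monomials of the canonical model pairwise mathematically distinct: take an ontology containing $\pair{\exists R\sqsubseteq \exists S}{u}$, $\pair{\exists R^-\sqsubseteq \exists S^-}{r}$, $\pair{\exists S\sqsubseteq \exists R^-}{s}$, $\pair{\exists S^-\sqsubseteq \exists R}{t}$ and the single assertion $\pair{R(a,b)}{p}$, where every annotation is already unique (this is the paper's Example~\ref{ex:anonymous}). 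The anonymous branch grown from $a$ and the one grown from $b$ each produce a tuple in $R$ annotated $p\otimes r\otimes s\otimes u\otimes t$, i.e., two distinct matches carrying mathematically equal monomials---a coincidence that uses the \emph{same} multiset of sources, so your claim that fresh variables keep ``derivations that use different sources'' apart does not cover it. The paper separates these branches structurally: each assertion $\pair{R(a,b)}{p}$ is replaced by $\pair{R_{a,b}(a,b)}{p}$ with a fresh role name, and every inclusion applicable to $R^{(-)}$ gets a freshly annotated copy over $R^{(-)}_{a,b}$ (conditions~1--3). Since the fresh roles never reappear on right-hand sides, the trees rooted at $a$ and at $b$ are tagged by distinct fresh variables, the canonical model's monomials become pairwise distinct, and the homomorphism from the canonical model into any model is forced to be injective (Lemma~\ref{lem:not-equal}), yielding the one-to-one correspondence of matches that your multiplicity bookkeeping needs.

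Neither of your fallbacks repairs this. Declaring the surplus multiplicity ``redundant'' is incorrect: in the merely-renamed instance the doubled monomial is genuinely entailed (the canonical model has two matches), so no $p_m$ with pairwise distinct monomials collapsing to $p$ under $\theta$ exists, and the biconditional of the lemma fails. Duplicating atoms in $q_m$ changes which matches exist and would require inequalities between the copies to distinguish occurrences---precisely what the paper avoids (cf.\ the remark after Theorem~\ref{thm:trans}), because CQ entailment with inequalities in \DLLITE is undecidable. The remainder of your proposal---the collapsing map behaving homomorphically, model lifting in both directions, and the polynomial size bound---is sound and coincides with the paper's Claim~\ref{cl:one} and the closing size argument; the one missing idea is the role-individualization, without which the first bullet of the lemma cannot be achieved simultaneously with the second.
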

We show that, given  $(\obdaSpecification_m,\dataInstance_m)$ 
and  $(q_m,p_m)$ as in
Lemma~\ref{lem:not-equal-marked},
 $(q_m,p_m)$ can be translated
into a set of 
queries such that $(\obdaSpecification_m,\dataInstance_m)$ 
entails $(q_m,p_m)$ iff it 
entails at least one of these queries. 
We first define the translation of a BCQ where all terms are variables (no individual names 
and no polynomials), and then adapt the translation for
 the general case. 
Given the BCQ $q_m=\exists \vec{x}.\ \varphi(\vec{x})$ with $k$ atoms and 
   $p_m\in \NPr$ with $n$ monomials,
define $\tr{q_m}{p_m}$ as the set of all BCQs: 
\begin{equation}\label{eq:trans}
\textstyle \exists \vec{y}.\    \bigwedge_{1\leq i\leq n}\varphi_i(\vec{x_i}),
\end{equation} 
where $\vec{y}=\vec{x_1},\ldots, \vec{x_n}$ and 
each $q_i=\exists \vec{x_i}.\ \varphi_i(\vec{x_i})$ is a `copy' of $q$ 
in which we replace each variable $x\in\vec{x}$ by a fresh variable $x_i\in\vec{x_i}$. 
We check whether we can find 
  the monomials of the polynomial in these matches 
by replacing the last variable in each $j$-th atom 
  of $q_i$ by a monomial $p_{i,j}\in \NM$ built from symbols 
  occurring in $p_m$ such that $\prod_{1\leq j\leq k}  p_{i,j}= p_i$
  for some $p_i\in\NPr$, 
with $1\leq i\leq n$; and  $\sum_{1\leq i\leq n}  p_i= p$.
 
 The translation of a BCQ with individual names is similar, 
 except that we must add such individual names 
 in each copy of the query; that is,     
 we would replace the corresponding variable in the translation with 
 the individual name occurring in the query. 
Theorem~\ref{thm:trans} formalises the correctness of our translation, where 
we write $(\obdaSpecification,\dataInstance)\models \tr{q}{p}$ 
to express that
there is $q'\in \tr{q}{p}$ such 
that $(\obdaSpecification,\dataInstance)\models q'$. 
%
\begin{example}
Consider the query $$q=\exists xyzw. ({\sf headGov}(x,y,z)\wedge {\sf City}(y,w))$$
  and the polynomial $p=(s\otimes t)\oplus (s\otimes r)$. 
Then, 
\[
  \exists x_1 y_1 x_2 y_2. (
  \begin{array}[t]{@{}l}
    {\sf headGov}(x_1,y_1,s)\wedge {\sf City}(y_1,t) \wedge{}\\
    {\sf headGov}(x_2,y_2,s)\wedge {\sf City}(y_2,r))
  \end{array}
\]
is in $\tr{q}{p}$.
\end{example}
\begin{restatable}{theorem}{Theoremtranslation}\label{thm:trans} 
Let $(\obdaSpecification,\dataInstance)$ be an  OBDA instance, 
$q$ a BCQ and $p \in \NPr$ a polynomial 
formed of mathematically distinct monomials. 
$(\obdaSpecification,\dataInstance)\models (q,p) \text{ iff }
(\obdaSpecification,\dataInstance)\models \tr{q}{p}.$ 
\end{restatable}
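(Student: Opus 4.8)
The plan is to reduce the statement to a claim about a \emph{single} interpretation and then lift it to entailment through the canonical model of $(\obdaSpecification,\dataInstance)$, where the only real difficulty lies. If $(\obdaSpecification,\dataInstance)$ is unsatisfiable both sides hold vacuously, so assume it is satisfiable. Write $p=p_1\oplus\cdots\oplus p_n$ with $p_1,\dots,p_n\in\NM$ its monomials, which are mathematically distinct by hypothesis, and let $q$ have $k$ atoms.

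\textbf{Per-model equivalence.} The core lemma I would prove is that for \emph{every} annotated interpretation $\Imc$, $\Imc\models(q,p)$ iff $\Imc\models q'$ for some $q'\in\tr{q}{p}$. For ``$\Rightarrow$'', $p\subseteq\p{\Imc}{q}$ means that each $p_i$ occurs in $\p{\Imc}{q}$, so there is a match $\pi_i\in\nu_\Imc(q)$ with $\prod_{P(\vec{t},t)\in q}\pi_i^-(t)=p_i$; setting $p_{i,j}:=\pi_i^-(t)$ for the $j$-th atom (well defined up to mathematical equality, since $\cdot^\Imc$ is injective on monomials) gives a factorization $\prod_j p_{i,j}=p_i$ over the symbols occurring in $p$, hence a $q'\in\tr{q}{p}$, and placing each $\pi_i$ on the $i$-th disjoint copy $\vec{x_i}$ (carrying individual names along unchanged) yields a match of $q'$ in $\Imc$. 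For ``$\Leftarrow$'', any match of a $q'\in\tr{q}{p}$ restricts on each copy $\varphi_i$ to a match $\pi_i$ of $q$ whose $j$-th atom is forced to provenance $p_{i,j}^\Imc$, so $\pi_i$ has monomial $\prod_j p_{i,j}=p_i$; as the $p_i$ are distinct these are $n$ distinct matches and $p\subseteq\p{\Imc}{q}$. The distinctness hypothesis is used exactly here, to guarantee that the $n$ copies certify $n$ separate monomial occurrences.

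\textbf{From the per-model equivalence to the theorem.} The direction ``$\Leftarrow$'' of the theorem is then immediate and needs no quantifier exchange: if $(\obdaSpecification,\dataInstance)\models q'$ for some fixed $q'\in\tr{q}{p}$, every model $\Imc$ satisfies $q'$, hence satisfies $(q,p)$, so $(\obdaSpecification,\dataInstance)\models(q,p)$.

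\textbf{The hard direction and main obstacle.} For ``$\Rightarrow$'' the per-model equivalence only supplies a $q'$ that may depend on $\Imc$, whereas $(\obdaSpecification,\dataInstance)\models\tr{q}{p}$ requires one $q'$ valid in all models; overcoming this quantifier swap is the crux. I would use the canonical model $\Imc_{\mathsf c}$ of $(\obdaSpecification,\dataInstance)$, obtained by the annotated chase of the virtual assertions $\mapping(\dataInstance)$ under $\Omc$, which propagates provenance precisely through the $\otimes$ of the satisfaction clauses for inclusions. Since $\Imc_{\mathsf c}$ is a model, $(\obdaSpecification,\dataInstance)\models(q,p)$ gives $\Imc_{\mathsf c}\models(q,p)$, and the per-model equivalence yields a \emph{single} $q'\in\tr{q}{p}$ with $\Imc_{\mathsf c}\models q'$; note $\tr{q}{p}$ is finite because each $p_i$ admits only finitely many factorizations into $k$ monomials over the symbols of $p$. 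It remains to transfer $\Imc_{\mathsf c}\models q'$ to every model, for which I rely on the defining property of the canonical model: there is a provenance-preserving homomorphism from $\Imc_{\mathsf c}$ into every model $\Imc$ of $(\obdaSpecification,\dataInstance)$ that fixes individual names and maps the value of each monomial in $\Imc_{\mathsf c}$ to its value in $\Imc$. Composing a match of $q'$ in $\Imc_{\mathsf c}$ with this homomorphism produces a match in $\Imc$ --- crucially preserving the monomial constants $p_{i,j}$ occurring in $q'$ --- so $\Imc\models q'$ for every $\Imc$, i.e.\ $(\obdaSpecification,\dataInstance)\models q'$ and hence $(\obdaSpecification,\dataInstance)\models\tr{q}{p}$. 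I expect the provenance-preserving homomorphism, and the verification that membership in the canonical model transfers the monomial constants, to be the main technical obstacle; the remaining steps are routine.
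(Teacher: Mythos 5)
Your proposal is correct, and its first half coincides with the paper's own proof: the paper reduces the theorem to exactly your per-model equivalence --- for every annotated interpretation $\Imc$, $\Imc\models(q,p)$ iff $\Imc$ satisfies some $q'\in\tr{q}{p}$ --- and proves it by the same two constructions you give (distributing the $n$ matches witnessing $p\subseteq\p{\Imc}{q}$ over the $n$ disjoint copies, and conversely restricting a match of $q'$ to each copy, with the distinct-monomial hypothesis guaranteeing $n$ distinct matches and hence $n$ separate monomial occurrences). Where you genuinely diverge is the lifting step. The paper concludes with a bare chain of equivalences --- $(\obdaSpecification,\database)\models(q,p)$ iff every model satisfies $(q,p)$ iff every model satisfies $\tr{q}{p}$ iff $(\obdaSpecification,\database)\models\tr{q}{p}$ --- whose last step silently performs the quantifier exchange you identify as the crux: ``every model satisfies \emph{some} member of $\tr{q}{p}$'' is read as ``\emph{some} member of $\tr{q}{p}$ holds in every model,'' even though the $q'$ produced by the per-model argument depends on the model, since different models may realize $p$ through different factorizations of the monomials $p_i$ across the atoms. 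Your canonical-model argument closes precisely this gap: the chase $\Imc_{\mathsf c}$ pins down a single $q'$, and the homomorphism from $\Imc_{\mathsf c}$ into an arbitrary model --- which, by the paper's own appendix definition of homomorphism, satisfies $h(a^{\Imc_{\mathsf c}})=a^{\Jmc}$ for all $a\in\NI\cup\NM$ and therefore carries the monomial constants $p_{i,j}$ of $q'$ along --- transfers it to every model. This is the same machinery the paper only deploys later, for Lemma~\ref{lem:not-equal}; its version of the present theorem is shorter but leaves the disjunction property implicit, whereas yours is self-contained and actually justifies the uniform choice of $q'$. Two minor points if you write this up: the appendix defines the canonical model only for \emph{marked} instances, so note explicitly that the chase construction and its homomorphism property require no markedness (markedness serves only to make the monomials in the canonical model pairwise distinct, which your argument never uses); and your parenthetical finiteness of $\tr{q}{p}$, while true, is not load-bearing, since the canonical model already selects a single $q'$.
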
 
Without assuming that $p \in \NPr$ is 
formed of mathematically distinct monomials, we would need 
to add inequalities to the queries in $\tr{q}{p}$
(there is  no way to distinguish $\tr{q}{p+p}$
from $\tr{q}{p}$). 
By Lemma~\ref{lem:not-equal-marked}, given 
the   OBDA instance $(\obdaSpecification,\dataInstance)$ 
and query $(q,p)$, 
there are $(\obdaSpecification_m,\dataInstance_m)$ 
and  $(q_m,p_m)$, satisfying the assumption of Theorem~\ref{thm:trans}, 
which we can use to decide whether 
$(\obdaSpecification,\dataInstance)\models (q,p)$. 
This is crucial for query entailment
since entailment of conjunctive queries with inequalities 
in \DLLITE is undecidable~\cite{Guti}.  

\mypar{Part 2: Query Rewriting}
We adapt the classical query rewriting algorithm 
\perfectRef~\cite{dl-lite} to decide whether 
$(\obdaSpecification,\dataInstance)\models q'$,
for $q'\in \tr{q}{p}$, where $(\obdaSpecification,\dataInstance)$
and $(q,p)$ are as in Theorem~\ref{thm:trans}.
%
When possible, we use the definitions and terminology from \cite[Sec. 5.1]{dl-lite}, 
adapting some of them to our setting if needed. 

For simplicity, for each role $R^-$ occurring in an OBDA instance
$((\Omc,\Mmc,\dataSchema),\dataInstance)$, we add to \Omc 
the annotated role inclusions $(R^-\sqsubseteq \overline{R},p_R)$
 and $(\overline{R}\sqsubseteq R^-,p'_R)$, where $\overline{R}$ is a fresh role name and 
 $p_R,p'_R$ are fresh variables of a provenance semiring.  
We   assume w.l.o.g.\ that inverse roles only occur in such role inclusions 
by replacing other occurrences of $R^-$ with $\overline{R}$.
The symbol $``_{-}"$ denotes non-distinguished non-shared variables. 
A positive inclusion $I$ is a provenance annotated role or concept inclusion without negations. 
$I$ is \emph{applicable} to $A(x,p)$ if $I$ is annotated 
with $v$ occurring in $p$ and it has $A$ in its right-hand side.
A positive inclusion $I$ is applicable to $R(x, y,p)$ if
\begin{enumerate*}[label=\textit{(\roman*)}]
\item $x=_-$, $I$ is annotated with $v$ occurring in $p$, and the
  right-hand side of $I$ is $\exists R$, or
\item $I$ is a role inclusion annotated with $v$ occurring in $p$ and its
  right-hand side is $R$ or~$R^-$.
\end{enumerate*}
Given $p\in\NM$ and   $v\in  \NV$   occurring in $p$, we denote by
$\pminus{p}{v}$ the result of removing one occurrence of $v$ from $p$.

\begin{definition}\label{def:atom}
Let $g$ be an atom and $I$ a positive inclusion applicable to $g$.
The atom obtained from $g$ by applying $I$, denoted by $gr(g,I)$, is 
defined as follows: 
\begin{itemize}[noitemsep] 
\item $gr(A(x,p ),(A_1\sqsubseteq A,v))=A_1(x,\pminus{p}{v})$; 
\item $gr(A(x,p ),(\exists R\sqsubseteq A,v))=R(x,_-,\pminus{p}{v})$; 
\item $gr(R(x,_-,p ),(A\sqsubseteq \exists R,v))=A(x,\pminus{p}{v})$; 
\item $gr(R(x,_-,p ),(\exists R_1\sqsubseteq \exists R,v))=R_1(x,_-,\pminus{p}{v})$; 
\item $gr(R(x,y,p ),(R_1\sqsubseteq R,v))=R_1(x,y,\pminus{p}{v})$; 
\item $gr(g,I)=R_1(y,x,\pminus{p}{v})$, if $g=R(x,y,p )$ and either 
		$I= (R_1\sqsubseteq R^-,v)$ or $I= (R_1^-\sqsubseteq  R,v)$. 
\end{itemize}
\end{definition}

We  use \perfectRef (Algorithm~\ref{alg:perfectRef}) originally presented in \cite{dl-lite}, 
except that
the applicability of a positive inclusion $I$ to an atom $g$ is as previously described
and $gr(g,I)$ follows Definition~\ref{def:atom}. 
\begin{algorithm}[tb]
\caption{\perfectRef}\label{alg:perfectRef}
\textbf{Input:} a BCQ $q$, a set of positive inclusions $\Omc_\Tmc$\\
\textbf{Output:} a set of BCQs $\PR$
\begin{algorithmic}[1]
\STATE $\PR:=\{q\}$
\REPEAT
	\STATE $\PR':= \PR$
	\FORALL{$q\in \PR'$, \textbf{all} $g,g_1,g_2\in q$ and \textbf{all} $I\in \Omc_\Tmc$}
			\IF{$\{q[g/gr(g,I)]\}\notin \PR$ and $I\in \Omc_\Tmc$ is applicable to $g\in q$}
			\STATE $\PR:= \PR\cup \{q[g/gr(g,I)]\}$ \label{ln:add}
			\ENDIF
		\IF{there are $g_1,g_2\in q$ such that $g_1$ and $g_2$ unify}
			\STATE $\PR:= \PR\cup \{\tau(\mn{reduce}(q, g_1, g_2))\}$
		\ENDIF
	\ENDFOR
\UNTIL{{$\PR'= \PR$}}\\
\RETURN $\PR$
\end{algorithmic}
\end{algorithm}
Let $q[g/g']$ denote  the BCQ obtained from $q$ by
 replacing the atom $g$ with a new atom $g'$;
 let~$\tau$ be a function that takes as input a BCQ $q$ and returns
a new BCQ obtained by replacing each occurrence of an unbound 
variable in $q$ with the symbol `$_-$'; and
let ${\sf reduce}$ be a function that takes as input a BCQ $q$ and
two atoms $g_1$, $g_2$ and returns the result of applying to $q$ the most
general unifier of $g_1$ and~$g_2$ (unifying mathematically equal terms).
$\perfectRef(q,\Omc_\Tmc)$ is the output of the algorithm
\perfectRef  over $q$ (with a monomial in $\NM$ in the last 
parameter of each atom) and a set $\Omc_\Tmc$ of positive inclusions of 
an OBDA instance $((\Omc,\Mmc,\dataSchema),\dataInstance)$. 
\begin{example}
Consider an OBDA instance $((\Omc,\Mmc,\dataSchema),\dataInstance)$
as in Ex.~\ref{exa:mapping}.
We call Algorithm~\ref{alg:perfectRef}
with $\Omc_\Tmc$ and the query $q=\exists x. \ex{Mayor}(x,p\otimes n\otimes s)$  as input. 
 Since  $I$ is applicable to 
$g=\ex{Mayor}(x,p\otimes n\otimes s)$, in Line~\ref{ln:add}, 
Alg.~\ref{alg:perfectRef} adds to $\PR$  the result of replacing
$g$ by $gr(g,I)=\ex{headGov}(x,_-,p\otimes n)$ in $q$.
Hence, 
$q^\ddagger=\exists x,y.\ \ex{headGov}(x,y,p\otimes n)\in\perfectRef(q,\Omc_\Tmc).$  
Indeed $q^\ddagger$ is a rewriting of $q$.
\end{example}

Our next theorem states the correctness of  Algorithm~\ref{alg:perfectRef}. 
\begin{restatable}{theorem}{Theoremqueryrewriting1}\label{thm:combined1}
  Let $q$ be a BCQ and $\Omc_\Tmc$ the set of positive inclusions of an OBDA 
  specification $\Pmc=(\Omc,\Mmc,\Smc)$.
  Given $q$ and $\Omc_\Tmc$ as input, 
  Algorithm~\ref{alg:perfectRef} terminates 
   and 
  outputs a set of BCQs $PR$ 
  such that, for all data instances \Dmc where   $(\Pmc,\dataInstance)$ 
  is satisfiable,
  $(\Pmc,\dataInstance)\models q$ iff there is $q^\ddagger\in PR$ 
  such that 
   $((\emptyset,\Mmc,\Smc),\dataInstance)\models q^\ddagger$.
\end{restatable}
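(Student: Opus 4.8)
The plan is to prove the three components of the statement separately---termination, soundness, and completeness---following the structure of the correctness argument for the classical \perfectRef in \cite{dl-lite}, while tracking the extra monomial bookkeeping introduced by our modified applicability condition and by Definition~\ref{def:atom}. The key new observation is that the operation $\pminus{p}{v}$, which deletes a single occurrence of the variable $v$ from a monomial, is the exact syntactic inverse of one application of an inclusion annotated with $v$: semantically, if a premise atom carries $\pminus{p}{v}$ and the inclusion annotated with $v$ holds, the relevant clause of our annotated semantics forces the conclusion atom to carry $(\pminus{p}{v})\otimes v = p$.

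For \textbf{termination} I would bound the search space of Algorithm~\ref{alg:perfectRef}. The number of atoms in any query of $PR$ never increases, since a $gr$-step replaces one atom by one atom and a reduce-step can only decrease the count; the predicate symbols are confined to those occurring in $q$ or in $\Omc_\Tmc$; the non-monomial terms are confined to the variables of $q$ together with the symbol ``$_-$''; and---the genuinely new point---every monomial appearing in a rewriting is a \emph{submonomial} of one occurring in $q$, because $gr$ only ever deletes variable occurrences and never introduces new ones. Hence only finitely many queries can ever be produced, and as $PR$ grows monotonically to a fixpoint the loop halts. For \textbf{soundness} I would show, by induction along the rewriting chain from $q$, that for every $q'\in PR$ and every model $\Imc$ of $(\Pmc,\dataInstance)$, $\Imc\models q'$ implies $\Imc\models q$: a $gr$-step is licensed by reading the annotated semantics of the applied inclusion ``forwards'' (the removed $v$ is multiplied back, restoring the monomial of $g$), and a reduce-step only identifies variables, which merely specializes a match. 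Since every model of $(\Pmc,\dataInstance)$ is also a model of $((\emptyset,\Mmc,\Smc),\dataInstance)$ (dropping \Omc weakens the constraints while $\mapping(\dataInstance)$ is preserved), any witnessing $q^\ddagger$ with $((\emptyset,\Mmc,\Smc),\dataInstance)\models q^\ddagger$ yields $(\Pmc,\dataInstance)\models q$.

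For \textbf{completeness} I would use the annotated canonical model $\Imc_{c}$ obtained by chasing the virtual assertions $\mapping(\dataInstance)$ with the positive inclusions $\Omc_\Tmc$, annotating each derived fact with the product of the monomial of its originating virtual assertion and the annotations of all inclusions used in its derivation. After arguing that $\Imc_{c}$ is universal---so that $(\Pmc,\dataInstance)\models q$ iff $\Imc_{c}\models q$---I would take a match of $q$ in $\Imc_{c}$ and trace every matched fact back towards the data: atoms matched to virtual assertions are kept, while an atom matched to an inclusion-derived fact is rewritten by the sequence of $gr$-steps corresponding to its derivation, each step peeling off exactly one inclusion annotation and thereby mirroring the multiplicative accumulation of monomials in $\Imc_{c}$. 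Anonymous elements produced by $\exists$-inclusions are captured by the ``$_-$'' variables, and query atoms sharing such a witness are merged by the reduce and $\tau$ steps. This produces a rewriting $q^\ddagger\in PR$ matched entirely on virtual assertions, i.e.\ $((\emptyset,\Mmc,\Smc),\dataInstance)\models q^\ddagger$.

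I expect the main obstacle to lie in the completeness direction, precisely in checking that the monomial accounting is \emph{exact}. Unlike the classical setting, where matching predicates and individuals suffices, here the monomial peeled by $gr$ must coincide with the annotation the canonical model accumulated on the corresponding fact; commutativity and associativity of $\otimes$ make the order of peeling irrelevant, and $\pminus{p}{v}$ removing precisely one occurrence matches precisely one inclusion application. The delicate case is an anonymous witness generated by an $\exists$-inclusion and shared by several query atoms, where the reduce-unification must merge the atoms while the monomials accumulated along each incoming derivation are kept consistent; this is where the adaptation of the \cite{dl-lite} argument will require the most care.
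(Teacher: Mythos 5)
Your proposal is correct and takes essentially the same route as the paper, which proves Theorem~\ref{thm:combined1} by adapting the classical \perfectRef correctness argument: termination as in \cite[Lemma 34]{dl-lite} (your submonomial observation is exactly the paper's remark that the number of terms becomes exponential in the size of the monomials in $q$, since Definition~\ref{def:atom} only ever ``breaks'' monomials into smaller ones), and soundness/completeness ``as in \cite{dl-lite}'' with the modified applicability condition and $gr(g,I)$ ensuring each rewriting step respects the annotated semantics, the completeness direction resting on the annotated canonical (chase) model that the paper constructs in its appendix. The details you supply---reading an inclusion annotated with $v$ forwards so that $\pminus{p}{v}\otimes v=p$ modulo associativity/commutativity, and tracing a match in the canonical model back to the virtual assertions while peeling one annotation per inclusion application---are precisely the steps the paper compresses into its citation of the classical proof.
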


Termination of our modified version of \perfectRef is analogous
to~\cite[Lemma 34]{dl-lite}, except that now the number of 
terms is exponential in the size of monomials occurring in the query, and thus 
in the size of the query. This is due to Definition~\ref{def:atom}, 
where we `break' the monomial into a smaller one. 
Our modification does not change the upper bounds obtained with the algorithm, 
since for data complexity the query is not part of the input and 
the upper bound for combined complexity, which we 
establish in Theorem~\ref{thm:combined}, is obtained by 
a non-deterministic version of the algorithm.

\begin{restatable}{theorem}{Theoremqueryrewriting}\label{thm:combined}
Answering provenance annotated queries w.r.t.\ OBDA instances is
\mbox{\NP-}com\-ple\-te (combined complexity).
\end{restatable}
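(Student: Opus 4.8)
The plan is to prove the two directions of \NP-completeness separately: membership in \NP by chaining together the reductions of Parts 1 and 2, and \NP-hardness by a reduction from a classical \NP-complete problem. For membership, I would first note that annotations do not affect satisfiability, so one can decide in polynomial time whether $(\obdaSpecification,\dataInstance)$ is satisfiable; if it is not, every query is (vacuously) entailed and the algorithm accepts. Assuming satisfiability, the procedure guesses and verifies a chain of witnesses. The deterministic preprocessing step applies Lemma~\ref{lem:not-equal-marked} to replace $(\obdaSpecification,\dataInstance)$ and $(q,p)$ by the polynomially-sized $(\obdaSpecification_m,\dataInstance_m)$ and $(q_m,p_m)$ whose monomials are pairwise mathematically distinct. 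One then nondeterministically guesses a single query $q'\in\tr{q_m}{p_m}$: such a $q'$ consists of at most $n$ copies of $q_m$ (one per monomial of $p_m$) with a factor of the corresponding monomial placed in the last position of each atom, so $|q'|$ is polynomial and the factorisation condition defining membership in $\tr{q_m}{p_m}$ is checkable in polynomial time. By Theorem~\ref{thm:trans}, $(\obdaSpecification_m,\dataInstance_m)\models(q_m,p_m)$ iff some such $q'$ is entailed.

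The remaining task is to decide $(\obdaSpecification_m,\dataInstance_m)\models q'$ within \NP. Here I would run the \emph{nondeterministic} variant of \perfectRef: instead of materialising the whole (exponentially large) output set \PR, the procedure guesses a single \perfectRef derivation of one rewriting $q^\ddagger$ from $q'$. The crucial bound is that every application of $gr$ in Definition~\ref{def:atom} removes exactly one variable occurrence from a monomial via $\pminus{p}{v}$, while each reduce step strictly decreases the number of atoms; hence the measure ``twice the number of atoms plus the total degree of the monomials'' strictly decreases at every step, and since both quantities are polynomial in $|q'|$, every derivation has polynomial length and every intermediate query polynomial size. Thus $q^\ddagger$ can be guessed. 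By Theorem~\ref{thm:combined1}, $(\obdaSpecification_m,\dataInstance_m)\models q'$ iff some such $q^\ddagger$ satisfies $((\emptyset,\Mmc_m,\Smc_m),\dataInstance_m)\models q^\ddagger$; with an empty ontology this last entailment holds iff $q^\ddagger$ has a match in the virtual assertions induced by $\dataInstance_m$, so I finish by guessing and verifying such a match (checking equality of monomials modulo associativity and commutativity in polynomial time). As all guessed objects are polynomially bounded, the whole procedure witnesses membership in \NP.

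For \NP-hardness I would reduce from Boolean conjunctive query evaluation over a relational structure, which is already \NP-complete in combined complexity and can be realised with binary predicates alone, e.g.\ by encoding graph $3$-colourability as the existence of a homomorphism into $K_3$. Given such an instance, I take an empty ontology, identity mappings, annotate every source tuple with a fresh provenance variable, and place a fresh variable in the last position of every query atom, thereby falling back to the constraint-free database setting described after Theorem~\ref{thm:trans}. The provenance annotations then impose no restriction, so $(\obdaSpecification,\dataInstance)\models(q,p)$ coincides with ordinary BCQ entailment, which holds iff the homomorphism exists; this completes the hardness argument.

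The main obstacle is the membership direction, and specifically controlling the combinatorial blow-up introduced by the monomials. Both $\tr{q_m}{p_m}$ and the rewriting set \PR are exponentially large, so a deterministic enumeration would fail; the argument must instead show that each \emph{individual} witness ($q'$, the \perfectRef derivation yielding $q^\ddagger$, and the final match) stays polynomially sized and polynomially reachable. This is exactly what the monomial-shrinking property of Definition~\ref{def:atom} secures, and making the decreasing measure precise is the technical heart of the proof.
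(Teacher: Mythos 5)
Your membership argument follows the paper's proof almost step for step: a polynomial-time satisfiability check, passage to the marked instance via Lemma~\ref{lem:not-equal-marked}, a guess of $q'\in\tr{q_m}{p_m}$ justified by Theorem~\ref{thm:trans}, a guessed polynomial-length derivation in the nondeterministic version of \perfectRef yielding $q^\ddagger$, and a final evaluation of $q^\ddagger$ over the virtual assertions. Your explicit decreasing measure (twice the atom count plus the total degree of the monomials) is actually sharper than the paper, which only asserts that ``every query returned by \perfectRef can only be generated after a polynomial number of transformations''; your measure makes this precise, since each $gr$ step removes one variable occurrence via $\pminus{p}{v}$ and each reduce step removes an atom without increasing degree. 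One detail is off, however: you call the application of Lemma~\ref{lem:not-equal-marked} a ``deterministic preprocessing step.'' Marking the \emph{instance} is deterministic, but the marked query $(q_m,p_m)$ is not uniquely determined by $(q,p)$: the renaming $\dagger$ is many-to-one (annotations may repeat in the original instance while marked annotations are fresh and unique), so $(q,p)$ has many $\dagger$-preimages and only some of them are entailed when $(q,p)$ is. This is precisely why the paper \emph{guesses} $(\obdaSpecification_m,\dataInstance_m)$ and $(q_m,p_m)$ alongside $q'$ and then verifies in polynomial time that $(q,p)$ arises from $(q_m,p_m)$ by applying $\dagger$ and collapsing the marked roles $R_{a,b}$ back to $R$. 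Inside your nondeterministic machine the fix is trivial --- fold this choice into the guess --- but the deterministic step you describe does not exist.

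The hardness direction contains a genuine, though easily repaired, hole: the decision problem takes the polynomial $p$ as part of the input, yet your reduction never constructs it. Worse, with your choice of a \emph{fresh} annotation variable per source tuple, no single polynomial can do the job: each homomorphism of the query into the data yields a different monomial, so for any fixed $p$ the entailment $(\obdaSpecification,\dataInstance)\models (q,p)$ would test for a match using one specific multiset of tuples rather than for mere $3$-colourability, and summing over all candidate monomials makes $p$ exponentially large. The repair is to annotate every tuple with the \emph{same} variable $t$ (the definition of annotated data instances permits repeated annotation variables) and to set $p=t^k$, where $k$ is the number of query atoms; then every match in every model contributes exactly the monomial $t^k$, so $(q,t^k)$ is entailed iff the plain BCQ is, and \NP-hardness follows. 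To be fair, the paper's own proof dispatches hardness with a one-line appeal to \NP-hardness of relational CQ answering and implicitly requires the same care; but since you chose to spell the reduction out, the missing $p$ is the step that, as written, would fail.
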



\section{Computing the Provenance of a Query }\label{sec:idem}

We now consider the problem of computing the provenance
of a query. To avoid the case of an infinite provenance, we focus on the special case
where the provenance semiring is fully idempotent, which is a sufficient condition for
finite provenance (Proposition~\ref{prop:finite}).
The semiring is \emph{fully idempotent} if
for every polynomial $p\in\NPr$, $p\otimes p= p$ and $p\oplus p=p$.
This is the case, e.g., if the provenance refers
to the name of the source of the knowledge; having several times the same name does not affect the result.
Alternatively, one can model
access rights and observe whether
certain pieces of knowledge are needed for the entailment of a query w.r.t.\
an OBDA instance.

For fully idempotent semirings, the task corresponds to computing
relevant monomials. More precisely, in this special case we want to compute all monomials
$p$ such that $(\obdaSpecification,\dataInstance)\models (q,p)$.
The \emph{provenance of the query w.r.t.\ the OBDA instance} is the addition of all these monomials.
This definition is equivalent to the general one since the semiring is idempotent: repetitions of a monomial do
not affect the result, and repetitions of a variable within a monomial can be removed.
If the semiring is only multiplicatively idempotent, then computing monomials does not
suffice, as some of them may appear several times. However, the problem
is still simplified to find the (finite) number of repeated monomials to be observed.
In general, the query polynomial may be composed of exponentially many monomials, even if the
query is a simple one of the form $\exists x.A(a,x)$, with $A\in\NC$.
\begin{restatable}{proposition}{Propositionexponential}
\label{prop:exp}
There exists an 
OBDA instance $(\Pmc,\dataInstance)$ and a simple query $q$ such that the provenance polynomial of
$q$ w.r.t.\ $(\Pmc,\dataInstance)$ is formed of exponentially many monomials.
\end{restatable}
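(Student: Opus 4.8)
The plan is to construct a family of OBDA instances, parameterized by $n$, in which the derivations of a single concept membership $A_n(a)$ correspond to root-to-leaf paths in a complete binary tree of depth $n$, each path contributing a distinct monomial. I would take a schema with one unary predicate $P$, a data instance $\Dmc$ with the single annotated tuple $(a,b)\in P^{\Dmc}$, and a mapping $\Mmc$ with the single annotated rule $(A_0(x)\leftarrow P(x,z),\, m)$, so that $\Mmc(\Dmc)$ contains the virtual assertion $(A_0(a),\, m\otimes b)$. The ontology $\Omc$ then contains, for each level $i\in\{0,\dots,n-1\}$, the two annotated inclusions $(A_i\sqsubseteq A_{i+1},\,u_i)$ and $(A_i\sqsubseteq A_{i+1},\,v_i)$, where $m,b$ and all the $u_i,v_i$ are pairwise distinct variables. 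The query is the simple BCQ $q=\exists x.\,A_n(a,x)$.

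Next I would show that each of the $2^n$ path monomials is entailed. Fix a choice sequence $(c_0,\dots,c_{n-1})$ with $c_i\in\{u_i,v_i\}$ and set $p=m\otimes b\otimes c_0\otimes\cdots\otimes c_{n-1}$. For any model $\Imc$ of $(\Pmc,\Dmc)$ we have $(a^\Imc,(m\otimes b)^\Imc)\in A_0^\Imc$ because $\Imc\models\Mmc(\Dmc)$, and applying the satisfaction condition for the inclusion annotated with $c_i$ repeatedly, as in Example~\ref{ex:multiplication}, yields $(a^\Imc,p^\Imc)\in A_n^\Imc$. Hence the match sending $x$ to $p^\Imc$ witnesses $p\subseteq\p{\Imc}{q}$ in every model, so $(\Pmc,\Dmc)\models(q,p)$. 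As this holds for all $2^n$ choice sequences, the provenance of $q$ contains all $2^n$ corresponding monomials.

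Finally I would argue that these monomials do not collapse. Two distinct choice sequences differ at some level $j$, so one monomial contains the variable $u_j$ but not $v_j$, while the other contains $v_j$ but not $u_j$; since all chosen variables are pairwise distinct, under (multiplicative) idempotency each monomial is determined by its underlying set of variables, and the two are therefore mathematically distinct. Thus, in expanded form, the provenance polynomial is a sum of $2^n$ pairwise distinct monomials. Since $|(\Pmc,\Dmc)|$ is linear in $n$ (there are $n+1$ concept names and $2n$ inclusions, each of constant size), this is exponentially many monomials, as required. I expect the main obstacle to be precisely this distinctness argument: the construction only yields exponentially many monomials if the annotations are chosen so that no two paths coincide modulo associativity, commutativity, and idempotency, which is exactly why fresh, pairwise distinct variables are assigned to the two inclusions at each level rather than reusing labels.
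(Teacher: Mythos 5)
Your proof is correct and takes essentially the same route as the paper's: a linear-size chain of depth $n$ with two independently labeled ways to pass each level, so the $2^n$ derivation paths yield $2^n$ entailed monomials that remain pairwise distinct (even under idempotency) because each is determined by its set of pairwise-distinct fresh variables. The only cosmetic difference is the encoding of the per-level binary choice --- the paper uses two parallel concept chains $B_i, C_i$ with shared level labels $x_i, y_i$, whereas you duplicate the inclusion $A_i \sqsubseteq A_{i+1}$ with two distinct annotations $u_i, v_i$, which is equally legitimate since an annotated ontology is a set of (axiom, monomial) pairs.
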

For some queries, provenance cannot be expressed by a
provenance polynomial of polynomial length in the size of the ontology, even if
an expanded form is not required. This follows from known results
in monotone complexity~\cite{KaWi90}: there is
no monotone Boolean formula (i.e., propositional formula using only the connectives $\wedge$
and $\vee$)
of polynomial length expressing all the simple paths between two nodes in a graph.
This holds already for \emph{complete} graphs.
Graphs can be described in \DLLITE (and simpler logics) using basic inclusion
axioms, and monotone Boolean formulas are provenance polynomials over an idempotent
semiring, where the $\land$ and $\lor$ serve as product and addition.
Hence we have the following result~\cite{penaloza-09}.

\begin{restatable}{proposition}{PropIdempotent}\label{prop:idempotent}
There exist an OBDA instance  $(\Pmc,\dataInstance)$ and a query $q$ such that
the provenance   of $q$ w.r.t.\ $(\Pmc,\dataInstance)$
cannot be represented in polynomial space. This holds even for idempotent
semirings, and if every axiom has a unique label.
\end{restatable}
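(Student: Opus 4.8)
The plan is to reduce the claimed lower bound to the known monotone complexity result of Karchmer and Wigderson, which states that every monotone Boolean formula computing the $s$-$t$ connectivity function (equivalently, expressing the disjunction over all simple $s$-$t$ paths) on an $n$-node graph must have superpolynomial size. The key observation, already hinted at in the surrounding text, is that a provenance polynomial over a fully idempotent semiring \emph{is} a monotone Boolean formula: the product $\otimes$ plays the role of $\wedge$, the sum $\oplus$ plays the role of $\vee$, and full idempotency ($p \otimes p = p$, $p \oplus p = p$) together with commutativity and associativity makes the algebra coincide with the free distributive lattice on the generating variables, which is exactly the semantics of monotone formulas. Thus any space-efficient representation of the provenance would translate into a polynomial-size monotone formula for $s$-$t$ connectivity, contradicting Karchmer--Wigderson.

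The construction I would give proceeds as follows. First I would fix a complete graph $K_n$ on nodes $\{1,\dots,n\}$ and encode its edges using a single role name, say $R$, and individual names $a_1,\dots,a_n$, placing an annotated assertion $(R(a_i,a_j), v_{ij})$ for each ordered pair, where each edge gets a unique fresh provenance variable $v_{ij}$; this already satisfies the ``unique label'' clause of the statement. Next I would arrange the \DLLITE{} ontology so that reachability along $R$-paths is forced to propagate through concept membership: introduce a concept name, say $A$, assert $A(a_s)$ for the source $a_s$ with its own fresh label, and use inclusions of the form $\exists R^- \sqsubseteq A$ (suitably oriented via the fresh names $\overline{R}$ as in the paper's preprocessing) each carrying a unique label, so that a derivation that an atom $A(a_t)$ holds corresponds exactly to choosing a directed path from $a_s$ to $a_t$. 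The query $q$ would then simply ask whether $A(a_t)$ holds (with a fresh provenance variable in the last position, i.e.\ a standard query), and by the semantics of annotated entailment developed earlier in the excerpt, the provenance $\p{\Imc}{q}$ accumulated over all models is the sum over all such paths of the product of the labels along the path.

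The heart of the argument is verifying that, under full idempotency, the set of monomials appearing in the provenance polynomial of $q$ corresponds precisely to the \emph{simple} paths from $a_s$ to $a_t$: idempotency collapses any monomial containing a repeated edge-variable $v_{ij}$ to one that uses each edge at most once, and a monomial in which every variable is distinct and the induced edges connect $a_s$ to $a_t$ is exactly a simple path (modulo the bounded overhead of the axiom labels, which are shared across all paths and can be factored out). Consequently the provenance polynomial, as a monotone formula over the edge variables, computes the $s$-$t$ connectivity function on $K_n$. If this polynomial admitted a representation in space polynomial in the size of the OBDA instance --- which is itself polynomial in $n$ --- we would obtain a monotone formula of polynomial size for $s$-$t$ connectivity, contradicting the Karchmer--Wigderson lower bound.

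The main obstacle I anticipate is the faithful correspondence step: I must be careful that the \DLLITE{} derivations, together with the paper's semantics in which each derivation is independent and provenance is multiplied along an inclusion chain, really enumerate \emph{all and only} the simple $s$-$t$ paths after idempotent collapse, rather than some easier-to-represent subfamily or an over-count that includes non-simple walks in a way that changes the computed Boolean function. Handling the directed-versus-undirected mismatch (the paper works with roles and their inverses, so I should either use $R$ and $\overline{R}$ symmetrically to model undirected edges or note that the Karchmer--Wigderson bound holds for directed connectivity as well) and confirming that the auxiliary axiom labels contribute only a common multiplicative factor that does not inflate the formula's essential complexity are the delicate points; everything else is routine once the encoding is pinned down. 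This is precisely why the result can be imported as stated with a citation to \cite{penaloza-09}, where the reachability-to-connectivity reduction for description logics was already carried out.
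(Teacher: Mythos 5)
There is a genuine gap, and it sits exactly where you flagged your ``main obstacle'': your encoding places the graph edges in the \emph{data instance} and hopes that \DLLITE{} inclusions will propagate reachability through concept membership, but \DLLITE{} cannot do this. The axiom $\exists R^-\sqsubseteq A$ fires in a single step: it derives $A(a_t)$ from any \emph{one} assertion $R(a_i,a_t)$, regardless of whether $a_i$ is reachable from $a_s$. Propagating $A$ along a data-level $R$-edge would require a conjunction on the left-hand side (essentially the rule $A(x)\wedge R(x,y)\rightarrow A(y)$), which is not a \DLLITE{} inclusion. Consequently, with your construction the provenance of the instance query $A(a_t)$ is a linear-size sum over the incoming edges of $a_t$, each monomial being one edge variable times one axiom label; it computes the wrong Boolean function and yields no lower bound. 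Moreover, this cannot be repaired by a cleverer data-level encoding with a fixed query: CQ answering in \DLLITE{} is FO-rewritable (data complexity in $\mathrm{AC}^0$), while $s$-$t$ reachability is $\mathrm{NL}$-complete, so no fixed query over a graph stored in the data can express connectivity at all.

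The paper's proof uses the same Karchmer--Wigderson reduction you describe, but moves the graph into the \emph{ontology}: the nodes are $n$ concept names $N$ with $A,B\in N$, the edges are the annotated inclusions $(C\sqsubseteq D, x_{CD})$ for all $C,D\in N$ (each with a unique label, as the statement requires), the data contributes the single assertion $(A(a),x_A)$, and the query is $q=B(a)$. Derivations of $B(a)$ then chain TBox inclusions $A\sqsubseteq C_1\sqsubseteq\cdots\sqsubseteq B$, the annotated semantics multiplies the labels along the chain, and the provenance of $q$ is exactly the sum over all paths from $A$ to $B$ in the complete graph of the products of the corresponding edge labels, i.e., the monotone $s$-$t$ connectivity formula, to which the lower bound of~\cite{KaWi88,KaWi90} applies directly. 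Your identification of fully idempotent provenance polynomials with monotone Boolean formulas, and your remarks about factoring out shared auxiliary labels and about directedness, are sound; it is only the graph encoding that must be replaced by the TBox-level one.
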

On the other hand, if every axiom is labeled with a unique variable,
then the provenance polynomial for \emph{instance queries} can be computed
efficiently, whenever its length does not increase greatly; that is, it can be computed in polynomial time in
the size of the input \emph{and the output}. The proof of this claim follows the same ideas
from~\cite{PeSe-AIJ17}, based on the fact that all the relevant monomials from the provenance are
enumerable with polynomial delay.

\begin{lemma}
\label{lem:poldelay}
The provenance $p$ of an instance query w.r.t.\ an OBDA instance $(\Pmc,\dataInstance)$
can be computed in polynomial time in the
size of $(\Pmc,\dataInstance)$  \emph{and of the polynomial $p$}.
\end{lemma}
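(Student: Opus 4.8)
The plan is to combine full idempotency with the unique-label assumption to turn the computation of $p$ into an \emph{output-sensitive} enumeration. Under multiplicative idempotency every monomial is a product of pairwise-distinct variables, so it can be identified with the set of annotation variables it contains; as the number of distinct variables is bounded by $|(\Pmc,\dataInstance)|$, each monomial has size linear in the input, and the provenance $p$ is simply the (duplicate-free, by additive idempotency) sum of all such sets. First I would characterise the relevant monomials: by the canonical-model semantics, a monomial $M$ satisfies $(\Pmc,\dataInstance)\models(q,M)$ for the instance query $q=\exists x.\,A(a,x)$ precisely when the assertion $A(a)$ admits a derivation — a chain of applications of annotated positive inclusions starting from a virtual annotated assertion, exactly as traced by the adapted \perfectRef of Section~\ref{sec:rewriting} — whose collected set of labels equals $M$. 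Thus computing $p$ reduces to listing, without repetition, all label-sets of such derivations.

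The core step, following~\cite{PeSe-AIJ17}, is to show that these label-sets can be enumerated with \emph{polynomial delay}. I would drive the search by incrementally extending partial derivations of $A(a)$ in the polynomial-size derivation graph underlying \perfectRef — whose nodes are basic concepts and roles applied to $a$ and whose edges are labelled by the inclusion and mapping variables — extending a partial derivation only when it can still be completed to reach the target, which is a plain reachability check and hence polynomial by the tractability of instance-query entailment in \DLLITE. Whenever the target $A(a)$ is reached, the label-set collected along the current derivation is a relevant monomial and is emitted.

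With such a polynomial-time completability test the traversal emits consecutive monomials with delay polynomial in $|(\Pmc,\dataInstance)|$, and every emitted monomial has polynomial size; hence the total time is bounded by the number of monomials times a polynomial in the input, which is polynomial in $|(\Pmc,\dataInstance)|$ and $|p|$, since $|p|$ is at least the number of distinct monomials. Summing the enumerated monomials produces $p$ within the same bound, giving the lemma. The hard part is \emph{distinctness}: many syntactically different derivations may collect the same label-set, and inclusions such as the one in Example~\ref{ex:multiplication} create cycles, which idempotency collapses but which must not spawn repeated or spurious outputs. Establishing that the relevant label-sets — rather than the derivations — can be listed once each with polynomial delay is the delicate point, and is precisely where I would reuse the enumeration machinery of~\cite{PeSe-AIJ17}.
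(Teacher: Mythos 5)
Your proposal takes essentially the same route as the paper: the paper's entire proof of this lemma is the remark that, following the ideas of~\cite{PeSe-AIJ17}, the relevant monomials of the provenance are enumerable with polynomial delay, which together with the output-sensitive accounting you give yields the stated bound. Your elaboration---identifying monomials (under idempotency and unique labels) with label-sets of derivations in a polynomial-size derivation graph, enumerating them by backtracking with reachability-based completability tests, and deferring the duplicate-freeness of the emitted label-sets to the enumeration machinery of~\cite{PeSe-AIJ17}---is exactly the intended argument, just spelled out in more detail than the paper itself provides.
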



\begin{algorithm}[tb]
\caption{\computeProv}\label{alg:computeProv}
\textbf{Input:} a   BCQ $q_0$, an OBDA instance $((\Omc,\Mmc,\Smc),\database)$\\
\textbf{Output:} the provenance $p$ of $q$ w.r.t. $((\Omc,\Mmc,\Smc),\database)$

\begin{algorithmic}[1]
  \STATE   $PR:=\perfectRefIdem(q^{\star}_0,\Omc_\Tmc)$,
  \label{ln:notation}
  \FORALL{$q\in PR$}
  \FORALL{ matches $\pi$ of $q_{\vec{y}}$ in $\Imc_{\Mmc(\dataInstance)}$} \label{ln:standard}
  \STATE $PR:= PR\cup\{q^{-\star}_{\vec{y},\pi}\}$ \label{ln:addmatch}
  \ENDFOR
  \STATE $PR:= PR\setminus\{q\}$ \label{ln:remove}
  \ENDFOR
  \RETURN $p:=\sum_{q\in PR}\prod_{P(\vec{t},t)\in q}  t$ \label{ln:final}
\end{algorithmic}
\end{algorithm}

We give an algorithm  for computing the provenance of a BCQ
w.r.t.\ an OBDA instance.
We focus on BCQs that do not have monomials 
in the last term of the atom. 
A BCQ  $q=\exists \vec{x}.\varphi(\vec{x},\vec{a})$ 
is \emph{standard} if, for all $P(\vec{t},t)\in q$,
$t$ is a fresh variable in $\vec{x}$.
Algorithm~\ref{alg:computeProv} computes the provenance of a standard BCQ
w.r.t.\ an OBDA instance.
We adopt the same notation used for describing \perfectRef~\cite{dl-lite}
(also used in Section~\ref{sec:rewriting}).
  \perfectRefIdem is a variant of \perfectRef
(Algorithm~\ref{alg:perfectRef}), where the notions of
applicability of an  inclusion $I$ w.r.t.\ an atom $g$ and the
definition of $gr(g,I)$ are as follows.
$I$ is applicable to an atom $A(x,p)$ if $I$
has $A$ in its right-hand side.
A positive inclusion $I$ is applicable to an atom $R(x, y,p)$ if
\begin{enumerate*}[label=\textit{(\roman*)}]
\item $x=_-$, and the right-hand side of $I$ is $\exists R$, or
\item the right-hand side of $I$ is either $R$ or $R^-$.
\end{enumerate*}
Given $p\in\NM$ and   $v\in  \NV$, we define
$\pplusidem{p}{v}$ as $\pplus{p}{v}$ if $v$ does not occur in $p$, and we define
$\pplusidem{p}{v}$ as $p$, otherwise. E.g., $\pplusidem{vw}{v}=vw$.

\begin{definition}\label{def:atom2}
Let $g$ be an atom and $I$ a positive inclusion applicable to $g$.
The atom obtained from $g$ by applying $I$, denoted by $gr(g,I)$, is
defined as follows:
\begin{itemize}[noitemsep]
\item $gr(A(x,p ),(A_1\sqsubseteq A,v))=A_1(x,\pplusidem{p}{v})$; 
\item $gr(A(x,p ),(\exists R\sqsubseteq A,v))=R(x,_-,\pplusidem{p}{v})$; 
\item $gr(R(x,_-,p ),(A\sqsubseteq \exists R,v))=A(x,\pplusidem{p}{v})$; 
\item $gr(R(x,_-,p ),(\exists R_1\sqsubseteq \exists R,v))=R_1(x,_-,\pplusidem{p}{v})$; 
\item $gr(R(x,y,p ),(R_1\sqsubseteq R,v))=R_1(x,y,\pplusidem{p}{v})$; 
\item $gr(g,I)=R_1(y,x,\pplusidem{p}{v})$, if $g=R(x,y,p )$ and either
                $I= (R_1\sqsubseteq R^-,v)$ or $I= (R_1^-\sqsubseteq  R,v)$.
\end{itemize}
\end{definition}

For standard BCQs,  Algorithm~\ref{alg:computeProv}
is sound and complete. Termination of Algorithm~\ref{alg:computeProv} is
an easy consequence of termination of  \perfectRef.
The main difference between Algorithm~\ref{alg:computeProv}
and Algorithm~\ref{alg:perfectRef} (Section~\ref{sec:rewriting}) is that
 here we assume that a standard BCQ is given (without any provenance information) and
 we aim at computing its provenance. Instead of removing variables of the semiring
 while applying positive inclusions (Definition~\ref{def:atom}),
   we add the variables of the semiring whenever
 the associated positive inclusion is applied (Definition~\ref{def:atom2}).
In Line~\ref{ln:notation}, we write $q^\star$ to denote the result
of replacing each $t$ in  $P(\vec{t},t)\in q$ by $\star$,
where $\star$ is a fresh symbol from \NV. This
transformation ensures that in Definition~\ref{def:atom2}
the last term is always an element of \NM.
In Line~\ref{ln:standard}, we denote by $q_{\vec{y}}$ the result of replacing,
for each $P(\vec{t},t)\in q$,
the last term $t$ by a fresh variable from $\vec{y}$ (i.e., $q_{\vec{y}}$
is a standard BCQ).
We perform another transformation in Line~\ref{ln:addmatch},
denoted by $q^{-\star}_{\vec{y,}\pi}$, which is
the result of replacing, for each $P(\vec{t},t)\in q$,
the symbol $\star$ in $t$
  by $u\in\NM$ such that $u^\Imc=\pi(y)$ (if there are
  multiple mathematically equal such $u$, we simply choose $u$ arbitrarily), where $y$ is the
  last term of the corresponding atom in $q_{\vec{y}}$ (that is, $P(\vec{t},y)\in q_{\vec{y}}$).
  Observe that $\pi$ is a match of $q_{\vec{y}}$ in $\Imc_{\Mmc(\dataInstance)}$.

\begin{example}
Assume Algorithm~\ref{alg:computeProv} receives as input the standard query
$q_0 = \exists xz. {\sf Mayor}(x,z)$ and an 
OBDA instance $((\Omc,\Mmc,\Smc),\Dmc)$ with
$\Omc=\{(\exists {\sf headGov}\sqsubseteq {\sf Mayor},s)\}$ and
\[
  \Mmc(\Dmc) = \{
  \begin{array}[t]{@{}l}
    ({\sf headGov}({\sf Renier},{\sf Venice}),u),\\
    ({\sf headGov}({\sf Brugnaro},{\sf Venice}),v)\}.
  \end{array}
\]
In Line~\ref{ln:notation}, Algorithm~\ref{alg:computeProv}  calls
$\perfectRefIdem$, defined as  a variant of \perfectRef
(Algorithm~\ref{alg:perfectRef}), where the notions of
applicability of an  inclusion $I$ w.r.t. an atom $g$ and the
definition of $gr(g,I)$ are as in Section~\ref{sec:idem}. The return of
$\perfectRefIdem$ is $PR=\{\exists x. {\sf Mayor}(x,\star), \exists xz. {\sf headGov}(x,z,\star\otimes s)\}$.
 Then, for all $q\in PR$ and all matches $\pi$ of $q_{\vec{y}}$ in
 $\Imc_{\Mmc(\Dmc)}$ (if they exist) the algorithm adds $q^{-\star}_{\vec{y},\pi}$
 to $PR$. In this example, assume $q=\exists xz. {\sf headGov}(x,z,\star\otimes s)$.
 We have two matches 
 of  $q_{\vec{y}}=\exists xzy. {\sf headGov}(x,z,y)$ in $\Mmc(\Dmc)$,
 one mapping $y$ to $u$  (call this match $\pi$)
 and the other mapping $y$ to $v$ (call it $\pi'$).
 So, $q^{-\star}_{\vec{y},\pi}=\exists xz. {\sf headGov}(x,z,u\otimes s)$
 and $q^{-\star}_{\vec{y},\pi'}=\exists xz. {\sf headGov}(x,z,v\otimes s)$.
 In Line~\ref{ln:remove}, Algorithm~\ref{alg:computeProv} removes
 $q^\star_0$ from $PR$. Finally, in Line~\ref{ln:final},
 it returns the polynomial $u\otimes s \oplus v\otimes s$.
\end{example}
\begin{restatable}{theorem}{TheoremComputingProv} 
  Let $q$ be a standard BCQ and $(\Pmc,\dataInstance)$ an OBDA instance.
  Given $q$ and $(\Pmc,\dataInstance)$ as input to
  Algorithm~\ref{alg:computeProv}, it outputs the provenance of $q$
  w.r.t.\ $(\Pmc,\dataInstance)$.
\end{restatable}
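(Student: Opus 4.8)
The plan is to prove soundness and completeness of \computeProv\ by reducing to the correctness of \perfectRefIdem, which is the idempotent analogue of the \perfectRef\ variant whose correctness is already established in Theorem~\ref{thm:combined1}. First I would make precise the target: since the semiring is fully idempotent, the provenance of the standard BCQ $q$ is the sum $\bigoplus$ of all monomials $p\in\NM$ (taken up to mathematical equality, i.e.\ with repeated variables collapsed) such that $(\obdaSpecification,\dataInstance)\models(q,p)$. The algorithm replaces the fresh provenance variables of $q$ by the placeholder $\star$ (forming $q^\star_0$), rewrites using \perfectRefIdem\ (tracking which axiom labels $v$ get multiplied in via Definition~\ref{def:atom2}), and then, for each rewriting $q$, matches the stripped query $q_{\vec y}$ against the virtual assertions $\Imc_{\Mmc(\dataInstance)}$ to instantiate the $\star$ by the actual source monomials; the final output collects the products $\prod_{P(\vec t,t)\in q}t$ over all surviving queries.

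The key steps, in order, would be as follows. (1)~I would first state and prove the correctness of \perfectRefIdem\ by adapting Theorem~\ref{thm:combined1}: whereas the version in Definition~\ref{def:atom} \emph{removes} a variable occurrence $\pminus{p}{v}$ when applying a positive inclusion (reflecting that, in the entailment direction, the annotation of a derived fact already contains $v$), the idempotent version in Definition~\ref{def:atom2} \emph{adds} $v$ via $\pplusidem{p}{v}$ but, crucially, only if $v$ does not already occur, since under full idempotency $v\otimes v=v$. This idempotent absorption is exactly what guarantees termination of \perfectRefIdem, because the set of monomials built over the finitely many axiom labels is finite, so the $\star\otimes(\cdots)$ products stabilize; this mirrors the termination argument of \cite[Lemma~34]{dl-lite} but replaces the ``exponential in monomial size'' blowup with a bounded one. (2)~For soundness I would argue that every query $q^{-\star}_{\vec y,\pi}$ produced corresponds to a genuine derivation: the chain of inclusion applications recorded in the $\star\otimes s\otimes\cdots$ term, composed with the match $\pi$ against $\Imc_{\Mmc(\dataInstance)}$ that instantiates $\star$ by a source monomial $u$ with $u^\Imc=\pi(y)$, yields an annotated interpretation witness showing $(\obdaSpecification,\dataInstance)\models(q,\prod_t t)$, hence each summand in Line~\ref{ln:final} is a legitimate provenance monomial. (3)~For completeness I would take an arbitrary monomial $p$ with $(\obdaSpecification,\dataInstance)\models(q,p)$ and, using the semantics of annotated inclusions together with Theorem~\ref{thm:combined1}, trace the derivation of $p$ back through a sequence of inclusion applications to a virtual assertion; this sequence is precisely reproducible by \perfectRefIdem\ applied to $q^\star_0$ followed by the matching step, so $p$ (up to idempotent collapsing of repeated variables) appears as one of the products collected at Line~\ref{ln:final}.

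I expect the main obstacle to be the bookkeeping in the completeness direction, specifically reconciling the $\star$-placeholder mechanism with the multiplicative structure of monomials under full idempotency. The subtlety is that a single derivation may apply the same inclusion more than once (as in the unbounded multiplication of Example~\ref{ex:multiplication}), yet under full idempotency the corresponding label $v$ must be counted only once; I must verify that $\pplusidem{p}{v}$ faithfully mirrors this collapse at every rewriting step, and that the match $\pi$ supplies exactly the source-side factor with no spurious duplication. A related delicate point is that distinct rewritings in $PR$ may, after matching, yield mathematically equal monomials; because the output is an idempotent sum, these coincidences are harmless, but I would make explicit that the equality $p^\Imc=q^\Imc$ iff $p,q$ are mathematically equal (from the definition of annotated interpretations) lets me identify such duplicates and conclude that Line~\ref{ln:final} computes exactly the set of provenance monomials, i.e.\ the provenance of $q$.
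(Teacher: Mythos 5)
Your proposal takes essentially the same route as the paper's proof: both argue soundness (every monomial collected at Line~\ref{ln:final} comes from a \perfectRefIdem rewriting of $q^\star$ plus a match in $\Imc_{\Mmc(\dataInstance)}$ instantiating $\star$, hence is entailed by the semantics of annotated inclusions and soundness of \perfectRef) and completeness (every entailed monomial traces back, by completeness of \perfectRef adapted to Definition~\ref{def:atom2}, to such a rewriting-and-match pair and is therefore added to the output). Your extra attention to termination via idempotent absorption and to duplicate monomials being harmless under the idempotent sum merely makes explicit points the paper handles in the surrounding text or leaves implicit, so the proposal is correct and aligned with the paper.
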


The upper bounds from the previous section
for the general case obviously apply in the restricted idempotent case as well.



\section{Evaluation}
\label{sec:evaluation}

To evaluate the feasibility of our approach, we implemented a prototype system
(\ontoprov) that extends the state-of-the-art OBDA system \ontop \cite{CCKK*17}
with the support for provenance.  \ontop supports SPARQL query answering over
ontologies in OWL\,2\,QL, the W3C standard corresponding to \DLLITE
\cite{W3Crec-OWL2-Profiles}.  The algorithm of \ontop has two stages, an
offline stage, which classifies the ontology and saturates the input set of
mappings, and an online stage, which rewrites the input queries according to
the saturated set of mappings. \ontoprov enriches these steps by taking into
account provenance information, and relies on ProvSQL \cite{2018-vldb-provsql}
to handle provenance from the database and queries in the mappings that go
beyond the CQ fragment.
%
We compare \ontop v3.0.0-beta-3 and \ontoprov over the BSBM \cite{BiSc09} and
the NPD \cite{LRXC15} benchmarks.  Experiments were run on a server with
2~Intel Xeon X5690 Processors (24 logical cores at 3.47\,GHz), 106\,GB of RAM
and five 1\,TB 15K\,RPM HDs. As RDBMS we have used PostgreSQL~11.2.

\mypar{Evaluation with the BSBM Benchmark} The BSBM benchmark is designed to test the different features of SPARQL. It provides a baseline for our tests, since it comes with an empty ontology and therefore it does not require ontological reasoning. In this experiment we restrict to a set of parametric queries (called here \emph{query mix}) in the benchmark (9 in total) that are supported by our theoretical framework.

Table~\ref{table:bsbm-exp} compares the average time (over three test runs) to evaluate the query mix with both \ontop and \ontoprov, on two datasets containing 10k and 1M products (resp., \textsl{bsbm10k} and \textsl{bsbm1M}). Evaluation times for both systems are very close. Hence, 
without a complex ontology or complex mappings, the overhead for computing provenance is rather small.

\begin{table} 
  \footnotesize
  \centering
  \begin{tabular}{@{}l c c@{}}
    \toprule
    \textsl{Dataset} & \textsl{mixTime} \ontop & \textsl{mixTime} \ontoprov \\
    \midrule
    \textsl{bsbm10k} & 2.0s & 3.2s \\
    \textsl{bsbm1M} & 326s & 364s \\
    \bottomrule
  \end{tabular}
  \caption{BSBM Experiment}\label{table:bsbm-exp}
  \vspace{-1em}
\end{table}

\mypar{Evaluation with the NPD Benchmark} As opposed to BSBM, the NPD Benchmark
is specifically tailored to OBDA systems; it comes with a complex ontology,
complex mappings, and queries of various kinds.
We restrict to 12 user queries that are supported by our framework. We use the
dataset \textsl{NPD}, containing real-world data about the oil extraction
domain, and the dataset \textsl{NPD10}, which is 10 times the size of
\textsl{NPD} and is generated by a \emph{data scaler}
\cite{DBLP:journals/semweb/LantiXC19}.
Differently from the BSBM benchmark, in NPD we observed many timeouts (set to
40 minutes) when running the benchmark queries with \ontoprov. This is due to
the fact that, in NPD, the optimizations performed by \ontop over the query
unfoldings are crucial for getting reasonably compact SQL queries.  Such
optimizations, however, need to be disabled in \ontoprov to guarantee
completeness. In fact, we are interested in \emph{all} the possible ways to
derive a result, and cannot identify and discard redundant derivations. For a
broader discussion about these aspects, please refer to the additional
material.

We assume that a user of \ontoprov would be more interested in understanding
the reason for a \emph{specific} answer tuple, rather than getting in bulk all
the possible explanations for all possible answer tuples. To simulate such user
interaction, in our tests we have actually instantiated the NPD queries with
answer tuples, and ran the obtained \emph{instantiated queries} (which are, in
fact, BCQs) over \ontoprov. Table~\ref{table:npd-exp} contains the aggregate
results of our runs. For each of our tests, we performed 5 test runs.
\begin{table}[t]
  \footnotesize
  \centering
  \begin{tabular}{@{}l@{\hspace{.5ex}}c@{\hspace{.5ex}}c@{\hspace{.5ex}}c | c@{\hspace{.5ex}}c@{\hspace{.5ex}}c@{\hspace{.5ex}}c@{\hspace{.5ex}}c@{\hspace{.5ex}}c@{}}
    \toprule
    \multicolumn{4}{c}{\ontop} & \multicolumn{6}{c}{\ontoprov} \\
    \textsl{Q} & \textsl{\#unf} & \textsl{t} & \textsl{t10} & \textsl{\#unf} & \textsl{\#uinst} & \textsl{tinst} & \textsl{tinst10} & \textsl{\#prov} & \textsl{\#prov10}\\
    \midrule
    1 & 0 & 29.5 & 172.4 & 16 & 16 & 5.2 & 1.7 & 1524 & 49\\
    2 & 0 & .5 & 3.1 & 32 & 32 &  .3 & .4 & 28 & 60\\
    3 & 24 & 5.0 & 51.1 & 16 & 16  & 248.0 & 296.5 & 84 & 153\\
    4 & 0 & 3.2 & 24.3 & 16 & 16  & 437.6 & 297.3 & 90 & 53\\
    5 & 0 & .1 & .2 & 0 & 0  & .1 & 1.5 & 1 & 1\\
    6 & 13 & 107.5 & 804.3 & 369  & 369  & 1439.3 & tout & 426 & tout\\
    7 & 0 & .4 & .2 & 0 & 0 & .1 & .3 & 1 & 1\\
    9 & 15 & 6.8 & 53.4 & 64 & 55 & .3 & .9 & 5 & 5\\
    10 & 1 & .4 & 25.4 & 4 & 4 & .8 & 11.3 & 1 & 7\\
    11 & 6 & 53.6 & 760.8 & 184 & 184  & 1342.6 & tout & 474 & tout\\
    12 & 8 & 69.1 & 1215.5 & 185 & 185  & 1248.1 & tout & 476 & tout\\
    31 & 21 & 60.3 & 633.0 & 248 & 239 & 1.9 & 5.3 & 120 & 60\\
    \bottomrule
  \end{tabular}
  \caption{NPD Experiment (times in seconds)}\label{table:npd-exp}
  \vspace{-1em}
\end{table}

The columns \textsl{\#unf} and \textsl{\#uinst} denote the number of times a
\texttt{UNION} operator appears in the unfolding of an NPD query and an
instantiated query, respectively. This measure gives an idea on the complexity
of the unfolding, and we can observe that the unfoldings produced by \ontoprov
are much more complex than those produced by \ontop. As argued above, this is
because \ontoprov disallows some optimizations. Columns \textsl{t} and
\textsl{t10} denote the average execution times of the queries over the
datasets \textsl{NPD} and \textsl{NPD10}, respectively, and for instantiated
queries these values are respectively denoted by \textsl{tinst} and
\textsl{tinst10}.  The execution times for \ontoprov are generally much higher
than for \ontop. We attribute this to the increased complexity of the
unfoldings.  Columns \textsl{\#prov} and \textsl{\#prov10} denote the number of
results for the instantiated queries, respectively over \textsl{NPD} and
\textsl{NPD10}. These numbers can be interpreted as the number of possible ways
an answer tuple can be derived, and give an indication on the complexity of
the benchmark itself. For instance, for query~\textsl{1} over the \textsl{NPD}
dataset there are on average 1524 explanations for a single answer tuple.

This test shows that the approach is feasible even with complex ontologies and mappings, 
but also that more work is needed in order to devise optimization techniques dedicated to a setting with provenance.





\section{Conclusions and Discussion}

We investigated the problem of dealing with provenance within  OBDA, 
based on the provenance 
semiring approach introduced for databases. 
In our case, 
 every element of an OBDA instance
is annotated with provenance information. 
%
We showed that query rewriting techniques can be applied to deal with provenance as well. An evaluation
based on a prototypical implementation shows that our methods are feasible in practice.

A key difference between the problem of provenance computation (or its decision version)
and that of axiom pinpointing~\cite{ScCo-03,parsia-iswc07,baader-ki07}
and query explanation~\cite{DBLP:journals/jair/CalvaneseOSS13,DBLP:conf/ekaw/CroceL18,DBLP:journals/jair/BienvenuBG19} 
is that
axiom pinpointing 
and query explanation focus  on 
tracing the \emph{minimal} causes of a consequence (or the lack of it).
In contrast,  all possible derivations are relevant for provenance,
independently of whether a cause is minimal or not. 

As future work, we plan to investigate provenance with the monus operator.
We will also study the provenance of SPARQL query answering~\cite{GeertsKCF-icdt13} in OBDA. 
Our implementation computes the provenance of a query assuming that the 
semiring is multiplicatively idempotent. While this assumption 
is useful to identify which parts of the knowledge base contribute 
to the query result, it restricts the applicability 
of our approach to other settings, in particular, to the numerical ones. 
For capturing  probabilities, 
it is important to distinguish repetitions, so (multiplicative)
idempotency is not suitable. 
In our setting, dropping 
the idempotency condition leads to cases where the polynomial can be infinite. 
It would be interesting to investigate whether the polynomial 
can be finitely represented, so that its computation 
could be applied in a numerical setting.

\bibliographystyle{named}
\bibliography{paper-bib}

\begin{thebibliography}{}

\bibitem[\protect\citeauthoryear{Artale \bgroup \em et al.\egroup
  }{2009}]{ArtEtAl1}
Alessandro Artale, Diego Calvanese, Roman Kontchakov, and Michael
  Zakharyaschev.
\newblock The {DL-Lite} family and relations.
\newblock {\em J.\ of Artificial Intelligence Research}, 36(1):1--69, 2009.

\bibitem[\protect\citeauthoryear{Baader \bgroup \em et al.\egroup
  }{2007}]{baader-ki07}
Franz Baader, Rafael Pe{\~n}aloza, and Boontawee Suntisrivaraporn.
\newblock Pinpointing in the description logic $\mathcal{EL}^{+}$.
\newblock In {\em Proc.\ of the 30th Annual German Conf.\ on Artificial
  Intelligence (KI)}, pages 52--67, 2007.

\bibitem[\protect\citeauthoryear{Bienvenu \bgroup \em et al.\egroup
  }{2019}]{DBLP:journals/jair/BienvenuBG19}
Meghyn Bienvenu, Camille Bourgaux, and Fran{\c c}ois Goasdou{\'e}.
\newblock Computing and explaining query answers over inconsistent {DL-Lite}
  knowledge bases.
\newblock {\em J.\ of Artificial Intelligence Research}, 64:563--644, 2019.

\bibitem[\protect\citeauthoryear{Bizer and Schultz}{2009}]{BiSc09}
Christian Bizer and Andreas Schultz.
\newblock The {Berlin} {SPARQL} benchmark.
\newblock {\em Int.\ J.\ on Semantic Web and Information Systems}, 5(2):1--24,
  2009.

\bibitem[\protect\citeauthoryear{Borgida \bgroup \em et al.\egroup
  }{2008}]{BoCR08b}
Alexander Borgida, Diego Calvanese, and Mariano Rodriguez-Muro.
\newblock Explanation in the \textit{DL-Lite} family of description logics.
\newblock In {\em Proc.\ of the 7th Int.\ Conf.\ on Ontologies, DataBases, and
  Applications of Semantics (ODBASE)}, volume 5332 of {\em LNCS}. Springer,
  2008.

\bibitem[\protect\citeauthoryear{Bourgaux and Ozaki}{2019}]{bourgauxozaki}
Camille Bourgaux and Ana Ozaki.
\newblock Querying attributed {DL-Lite} ontologies using provenance semirings.
\newblock In {\em Proc.\ of the 33rd AAAI Conf.\ on Artificial Intelligence
  (AAAI)}, 2019.

\bibitem[\protect\citeauthoryear{Buneman and
  Kostylev}{2010}]{DBLP:conf/semweb/BunemanK10}
Peter Buneman and Egor~V. Kostylev.
\newblock Annotation algebras for {RDFS} data.
\newblock In {\em Proc.\ of the 2nd Int.\ Workshop on the Role of Semantic Web
  in Provenance Management (SWPM@ISWC)}, 2010.

\bibitem[\protect\citeauthoryear{Calvanese \bgroup \em et al.\egroup
  }{2007}]{dl-lite}
Diego Calvanese, Giuseppe De~Giacomo, Domenico Lembo, Maurizio Lenzerini, and
  Riccardo Rosati.
\newblock Tractable reasoning and efficient query answering in description
  logics: {The} {DL-Lite} family.
\newblock {\em J.\ of Automated Reasoning}, 39(3):385--429, 2007.

\bibitem[\protect\citeauthoryear{Calvanese \bgroup \em et al.\egroup
  }{2013}]{DBLP:journals/jair/CalvaneseOSS13}
Diego Calvanese, Magdalena Ortiz, Mantas Simkus, and Giorgio Stefanoni.
\newblock Reasoning about explanations for negative query answers in {DL-Lite}.
\newblock {\em J.\ of Artificial Intelligence Research}, 48:635--669, 2013.

\bibitem[\protect\citeauthoryear{Calvanese \bgroup \em et al.\egroup
  }{2017}]{CCKK*17}
Diego Calvanese, Benjamin Cogrel, Sarah Komla-Ebri, Roman Kontchakov, Davide
  Lanti, Martin Rezk, Mariano Rodriguez-Muro, and Guohui Xiao.
\newblock {Ontop}: {Answering} {SPARQL} queries over relational databases.
\newblock {\em Semantic Web J.}, 8(3):471--487, 2017.

\bibitem[\protect\citeauthoryear{Calvanese \bgroup \em et al.\egroup
  }{2019}]{provenance-IJCAI-2019}
Diego Calvanese, Davide Lanti, Ana Ozaki, Rafael Pe{\~n}aloza, and Guohui Xiao.
\newblock Enriching ontology-based data access with provenance.
\newblock In {\em Proc.\ of the 28th Int.\ Joint Conf.\ on Artificial
  Intelligence (IJCAI)}, 2019.

\bibitem[\protect\citeauthoryear{Croce and
  Lenzerini}{2018}]{DBLP:conf/ekaw/CroceL18}
Federico Croce and Maurizio Lenzerini.
\newblock A framework for explaining query answers in {DL-Lite}.
\newblock In {\em Proc.\ of the 21st Int.\ Conf.\ on Knowledge Engineering and
  Knowledge Management (EKAW)}, pages 83--97, 2018.

\bibitem[\protect\citeauthoryear{Geerts \bgroup \em et al.\egroup
  }{2013}]{GeertsKCF-icdt13}
Floris Geerts, Grigoris Karvounarakis, Vassilis Christophides, and Irini
  Fundulaki.
\newblock Algebraic structures for capturing the provenance of {SPARQL}
  queries.
\newblock In {\em Proc.\ of the 16th Int.\ Conf.\ on Database Theory (ICDT)},
  pages 153--164. ACM, 2013.

\bibitem[\protect\citeauthoryear{Glimm and
  Ogbuji}{2013}]{W3Crec-SPARQL-1.1-entailment}
Birte Glimm and Chimezie Ogbuji.
\newblock {SPARQL}~1.1 entailment regimes.
\newblock {W3C} {Recommendation}, World Wide Web Consortium, March 2013.
\newblock Available at \protect\url{http://www.w3.org/TR/sparql11-entailment/}.

\bibitem[\protect\citeauthoryear{Green and Tannen}{2017}]{GreenT17}
Todd~J. Green and Val Tannen.
\newblock The semiring framework for database provenance.
\newblock In {\em Proc.\ of the 36th ACM Symp.\ on Principles of Database
  Systems (PODS)}, pages 93--99. ACM, 2017.

\bibitem[\protect\citeauthoryear{Green \bgroup \em et al.\egroup
  }{2007}]{Green07-provenance-seminal}
Todd~J. Green, Gregory Karvounarakis, and Val Tannen.
\newblock Provenance semirings.
\newblock In {\em Proc.\ of the 26th ACM Symp.\ on Principles of Database
  Systems (PODS)}, pages 31--40, 2007.

\bibitem[\protect\citeauthoryear{Guti{\'e}rrez-Basulto \bgroup \em et
  al.\egroup }{2015}]{Guti}
V{\'\i}ctor Guti{\'e}rrez-Basulto, Yazm{\'\i}n Ib{\'a}{\~n}ez-Garc{\'\i}a,
  Roman Kontchakov, and Egor~V. Kostylev.
\newblock Queries with negation and inequalities over lightweight ontologies.
\newblock {\em J.\ of Web Semantics}, 35(P4):184--202, 2015.

\bibitem[\protect\citeauthoryear{Harris and
  Seaborne}{2013}]{W3Crec-SPARQL-1.1-query}
Steve Harris and Andy Seaborne.
\newblock {SPARQL}~1.1 query language.
\newblock {W3C} {Recommendation}, World Wide Web Consortium, March 2013.
\newblock Available at \protect\url{http://www.w3.org/TR/sparql11-query}.

\bibitem[\protect\citeauthoryear{Kalyanpur \bgroup \em et al.\egroup
  }{2007}]{parsia-iswc07}
Aditya Kalyanpur, Bijan Parsia, Matthew Horridge, and Evren Sirin.
\newblock Finding all justifications of {OWL~DL} entailments.
\newblock In {\em Proc.\ of the 6th Int.\ Semantic Web Conf.\ (ISWC)}, volume
  4825 of {\em LNCS}, pages 267--280. Springer, 2007.

\bibitem[\protect\citeauthoryear{Karchmer and Wigderson}{1988}]{KaWi88}
Mauricio Karchmer and Avi Wigderson.
\newblock Monotone circuits for connectivity require super-logarithmic depth.
\newblock In {\em Proc.\ of the 20th ACM Symp.\ on Theory of Computing (STOC)},
  pages 539--550, 1988.

\bibitem[\protect\citeauthoryear{Karchmer and Wigderson}{1990}]{KaWi90}
M.~Karchmer and A.~Wigderson.
\newblock Monotone circuits for connectivity require super-logarithmic depth.
\newblock {\em SIAM J.\ on Discrete Mathematics}, 3(2):255--265, 1990.

\bibitem[\protect\citeauthoryear{Kontchakov \bgroup \em et al.\egroup
  }{2014}]{KRRXZ14}
Roman Kontchakov, Martin Rezk, Mariano Rodriguez-Muro, Guohui Xiao, and Michael
  Zakharyaschev.
\newblock Answering {SPARQL} queries over databases under {OWL~2~QL} entailment
  regime.
\newblock In {\em Proc.\ of the 13th Int.\ Semantic Web Conf.\ (ISWC)}, volume
  8796 of {\em LNCS}, pages 552--567. Springer, 2014.

\bibitem[\protect\citeauthoryear{Lanti \bgroup \em et al.\egroup
  }{2015}]{LRXC15}
Davide Lanti, Martin Rezk, Guohui Xiao, and Diego Calvanese.
\newblock The {NPD} benchmark: Reality check for {OBDA} systems.
\newblock In {\em Proc.\ of the 18th Int.\ Conf.\ on Extending Database
  Technology (EDBT)}, pages 617--628. OpenProceedings.org, 2015.

\bibitem[\protect\citeauthoryear{Lanti \bgroup \em et al.\egroup
  }{2019}]{DBLP:journals/semweb/LantiXC19}
Davide Lanti, Guohui Xiao, and Diego Calvanese.
\newblock {VIG}: Data scaling for {OBDA} benchmarks.
\newblock {\em Semantic Web J.}, 10(2):413--433, 2019.

\bibitem[\protect\citeauthoryear{Motik \bgroup \em et al.\egroup
  }{2012}]{W3Crec-OWL2-Profiles}
Boris Motik, Bernardo Cuenca~Grau, Ian Horrocks, Zhe Wu, Achille Fokoue, and
  Carsten Lutz.
\newblock {OWL~2} {Web} {Ontology} {Language} profiles.
\newblock {W3C} {Recommendation}, World Wide Web Consortium, December 2012.
\newblock Available at \protect\url{http://www.w3.org/TR/owl2-profiles/}.

\bibitem[\protect\citeauthoryear{Pe{\~n}aloza and Sertkaya}{2017}]{PeSe-AIJ17}
Rafael Pe{\~n}aloza and Bar{\i\c s} Sertkaya.
\newblock Understanding the complexity of axiom pinpointing in lightweight
  description logics.
\newblock {\em Artificial Intelligence}, 250:80--104, 2017.

\bibitem[\protect\citeauthoryear{Pe{\~{n}}aloza}{2009}]{penaloza-09}
Rafael Pe{\~{n}}aloza.
\newblock {\em Axiom Pinpointing in Description Logics and Beyond}.
\newblock PhD thesis, Dresden University of Technology, 2009.

\bibitem[\protect\citeauthoryear{Schlobach and Cornet}{2003}]{ScCo-03}
Stefan Schlobach and Ronald Cornet.
\newblock Non-standard reasoning services for the debugging of description
  logic terminologies.
\newblock In {\em Proc.\ of the 18th Int.\ Joint Conf.\ on Artificial
  Intelligence (IJCAI)}, 2003.

\bibitem[\protect\citeauthoryear{Senellart \bgroup \em et al.\egroup
  }{2018}]{2018-vldb-provsql}
Pierre Senellart, Louis Jachiet, Silviu Maniu, and Yann Ramusat.
\newblock {ProvSQL}: {Provenance} and probability management in {PostgreSQL}.
\newblock {\em Proc.\ of the VLDB Endowment}, 11(12):2034--2037, 2018.

\bibitem[\protect\citeauthoryear{Senellart}{2017}]{DBLP:journals/sigmod/Senellart17}
Pierre Senellart.
\newblock Provenance and probabilities in relational databases.
\newblock {\em SIGMOD Record}, 46(4):5--15, 2017.

\bibitem[\protect\citeauthoryear{Xiao \bgroup \em et al.\egroup
  }{2018}]{XCKL*18}
Guohui Xiao, Diego Calvanese, Roman Kontchakov, Domenico Lembo, Antonella
  Poggi, Riccardo Rosati, and Michael Zakharyaschev.
\newblock Ontology-based data access: {A} survey.
\newblock In {\em Proc.\ of the 27th Int.\ Joint Conf.\ on Artificial
  Intelligence (IJCAI)}, pages 5511--5519, 2018.

\bibitem[\protect\citeauthoryear{Zimmermann \bgroup \em et al.\egroup
  }{2012}]{DBLP:journals/ws/ZimmermannLPS12}
Antoine Zimmermann, Nuno Lopes, Axel Polleres, and Umberto Straccia.
\newblock A general framework for representing, reasoning and querying with
  annotated {Semantic} {Web} data.
\newblock {\em J.\ of Web Semantics}, 11:72--95, 2012.

\end{thebibliography}

\iftechrep
\clearpage
\appendix
\section{Proofs for Section~\ref{sec:prelim}}

\PropositionFinite*
\begin{proof}[Sketch]
Under multiplicative-idempotency, for any OBDA instance $(\obdaSpecification,\database)$
  and BCQ $q$, the number of possible monomials occurring $p\in\NPr$ such that
$(\obdaSpecification,\database)\models (q,p)$ is finite.
Thus, the only possibility for this set to
be infinite is if the   monomials   repeat  an unlimited number
of times. 

To entail such  polynomials, arbitrarily many
repetitions should happen in all models of $(\obdaSpecification,\database)$.
However, under
multiplicative-idempotency,
any OBDA instance (based on an annotated \DLLITE ontology)
 enjoys the finite domain property~\cite{ArtEtAl1}.
That is, if an OBDA instance has a model then it has a model \Imc
with $\Delta^\Imc$ finite.
\end{proof}

\section{Proofs for Section~\ref{sec:rewriting}}

To show Lemma~\ref{lem:not-equal-marked} we use the following notation.
An OBDA instance $(\obdaSpecification_m,\database_m)$
is \emph{\marked for an OBDA instance} 
$(\obdaSpecification,\database)$
if it
is the result of: 
\begin{enumerate}
\item
replacing each $\pair{R(a,b)}{p}$ by $\pair{R_{a,b}(a,b)}{p}$, where
$R_{a,b}$ is a fresh role name;
\item for all $a,b\in\NI$ and all $R\in\NR$
occurring in $(\obdaSpecification,\database)$, adding a concept inclusion
$\pair{\exists R^{(-)}_{a,b}\sqsubseteq C}{v}$ for each
$\pair{\exists R^{(-)}\sqsubseteq C}{u}$
occurring in it, where $v\in\NV$;
\item
for all $a,b\in\NI$ and all $R,S\in\NR$
occurring in $(\obdaSpecification,\database)$,
adding a role inclusion $\pair{R^{(-)}_{a,b}\sqsubseteq S^{(-)}_{a,b}}{v}$ for each
$\pair{R^{(-)}\sqsubseteq S^{(-)}}{u}$
occurring in it,
where $v\in\NV$;
\item replacing the annotation of each axiom, mapping
and tuple in a relation of \dataInstance by fresh
$v\in\NV$ (including the axioms of items above),
so that each $v$ is unique.
\end{enumerate}
%
Intuitively, we want to ensure that there is a model of $(\obdaSpecification_m,\database_m)$ where
elements in the anonymous part (i.e., not in the image of \NI)
connected (via roles) to the image of an individual
are associated with monomials containing at least one variable of the
semiring, which is not shared by  anonymous elements connected
to  the image of another individual. In other words,
we want to `mark' monomials associated to
elements derived from assertions of named individuals.

Conditions~$1$--$4$
are necessary and sufficient
to ensure that we cannot find
two monomials which are mathematically equal in $\Imc_{(\obdaSpecification,\database)}$.
Clearly, if Condition~$4$ does not hold we may find monomials mathematically equal in
$\Imc_{(\obdaSpecification,\database)}$.
As we show in the following example, this may also happen if $4$ holds
but not $1$--$3$.
\begin{example}\label{ex:anonymous}
Let $((\Omc,\mapping,\Smc),\database)$ be an OBDA instance
with
\[
  \begin{array}{l@{\qquad}l}
    (\exists R\sqsubseteq \exists S,u)
    & (\exists R^-\sqsubseteq \exists S^-,r)\\
    (\exists S\sqsubseteq \exists R^-,s)
    & (\exists S^-\sqsubseteq \exists R,t)
  \end{array}
\]
in \Omc and $(R(a,b),p)\in \mapping(\database)$.
Then  there are two tuples in $R^{\Imc_{(\obdaSpecification,\database)}}$
with the annotation $p\times r \times s\times u\times t$.
\end{example}

To show Lemma~\ref{lem:not-equal-marked}  we use the classical notion of
a \emph{canonical model}  of an OBDA instance.
As in Section~\ref{sec:rewriting}, 
for each role $R^-$ in the OBDA instance
$((\Omc,\mapping,\Smc),\dataInstance)$, we extend \Omc with
the   role inclusions $(R^-\sqsubseteq \overline{R},p_R)$
 and $(\overline{R}\sqsubseteq R^-,p'_R)$, where $\overline{R}$ is a fresh role name and
 $p_R,p'_R$ are fresh variables of a provenance semiring.
Assume w.l.o.g.\ that inverse roles only occur in such equivalences.
Let $\NM_{\sf min}$ be a \emph{min}imal subset of $\NM$ such that
for all $p\in\NM$ there is $q\in\NM_{\sf min}$ where
$p$ and $q$ are mathematically equal.
We define the canonical model $\Imc_{(\obdaSpecification,\database)}$ of a \marked OBDA instance
$(\obdaSpecification,\dataInstance)$, with $\obdaSpecification=(\Omc,\mapping,\Smc)$,
 as the union of $\Imc_0,\Imc_1,\ldots$, where the $\Imc_n$s are inductively defined as follows. For $n=0$,
$\Imc_0$ is defined by:
\[
  \begin{array}{rcl}
    \Delta^{\Imc_0} &=& \NI,\\
    \Delta^{\Imc_0}_{\sfm} &=& \NM_{\sf min},\\
    A^{\Imc_0} &=& \{ (a,p) \mid \pair{A(a)}{p}\in \mapping(\dataInstance)\},\\
    R^{\Imc_0} &=& \{ (a,b,p)\mid \pair{R(a,b)}{p}\in \mapping(\dataInstance)\},
  \end{array}
\]
for all $A\in \NC$ and all $R\in\NR$,    $a^{\Imc_0}=a\text{ for every }a\in\NI$
and $p^{\Imc_0}=q\in \NM_{\sf min}$, with $p$ and $q$ mathematically equal,  for every $p\in \NM$.
Assume now that $\Imc_{n}$ is defined.
We define $\Imc_{n+1}$ by choosing a positive inclusion $I\in\Omc$
and applying one of the following rules, %
\begin{itemize}[itemindent=0mm,leftmargin=2.5mm,labelsep=1mm]
\item if $I= \pair{A_1\sqsubseteq A}{p} $,
   $(a,v^{\Imc_n})\in A_1^{\Imc_n}$, and $\vec{t}=(a,(p\otimes v)^{\Imc_n})\not\in A^{\Imc_n}$,
then add $\vec{t}$ to $A^{\Imc_n}$,
\item if $I=  \pair{R_1\sqsubseteq R}{p} $,
   $(a,b,v^{\Imc_n})\in R_1^{\Imc_n}$, and $\vec{t}=(a,b,(p\otimes v)^{\Imc_n})\not\in R^{\Imc_n}$,
then add $\vec{t}$ to $R^{\Imc_n}$,
\item if $I=  \pair{R_1 \sqsubseteq R^-}{p} $ or $I=\pair{R_1^- \sqsubseteq R}{p}$,
   $(a,b,v^{\Imc_n})\in R_1^{\Imc_n}$, and $\vec{t}=(b,a,(p\otimes v)^{\Imc_n})\not\in R^{\Imc_n}$,
then add $\vec{t}$ to $R^{\Imc_n}$,
\item if $I=  \pair{\exists R\sqsubseteq A}{p} $,
  there is $b$ such that $(a,b,v^{\Imc_n})\in R^{\Imc_n}$, and
  $\vec{t}= (a,(p\otimes v)^{\Imc_n})\not\in A^{\Imc_n}$, then
add $\vec{t}$ to $A^{\Imc_n}$,
\item if $I=  \pair{A\sqsubseteq \exists R}{p} $,
$(a,v^{\Imc_n})\in A^{\Imc_n}$, and there is no $b$ such that $\vec{t}=(a,b,(p\otimes v)^{\Imc_n})\in R^{\Imc_n}$
then add a fresh element $b$ to $\Delta^{\Imc_n}$ and add $\vec{t}$ to $R^{\Imc_n}$,
\item if $I=  \pair{\exists R_1\sqsubseteq \exists R}{p} $,
there is $b$ such that $(a,b,v^{\Imc_n})\in R_1^{\Imc_n}$, and there is no $c$ such
that $\vec{t}=(a,c,(p\otimes v)^{\Imc_n})\in R^{\Imc_n}$
then add a fresh element $c$ to $\Delta^{\Imc_n}$, add $\vec{t}$ to $R^{\Imc_n}$.
\end{itemize}
We assume that rule application is fair, i.e., if a rule is applicable at a certain place,
it will   eventually be applied there. 
$\Imc_{(\obdaSpecification,\database)}$ is defined as the union of all such $\Imc_n$,
where $x^{\Imc_{(\obdaSpecification,\database)}}=x^{\Imc_0}\text{ for every }x\in\NI\cup\NM$.

Importantly, the last element of each tuple in $\Imc_{(\obdaSpecification,\database)}$
is  mathematically distinct from the others. This holds for
$\Imc_0$ since each axiom of \Omc is associated
with a variable of a semiring
appearing in at most one axiom.
For $n>0$, we have to consider tuples created using fresh anonymous elements.
The interpretation of each individual from $\NI$
occurring in \Omc can be connected via roles to an anonymous part of $\Imc_{(\obdaSpecification,\database)}$.
We propagate the annotations (which are unique) associated to
an individual to this anonymous part. Clearly, if these individuals
are not connected, annotations associated to them form disjoint sets
and the monomials are   mathematically distinct.
For connected individuals, we use the assumption that \Omc is
\marked.
Our assertions are of the form $\pair{R_{a,b}(a,b)}{p}$
and since we add only
concept and role inclusions of the form
 $\pair{\exists R^{(-)}_{a,b}\sqsubseteq C}{v}$ and
 $\pair{ R^{(-)}_{a,b}\sqsubseteq S^{(-)}_{a,b}}{v}$,
 the extension of the fresh roles $R_{a,b}$
can only have elements in the image of named individuals
and the extension  of $R$ can only have tuples where
at least one element is anonymous. So $R^{\Imc_{(\obdaSpecification,\database)}}_{a,b}
\cap R^{\Imc_{(\obdaSpecification,\database)}}=\emptyset$.

With this separation, if an anonymous element is connected to $a$ after
applying, e.g., an inclusion $\pair{\exists R_{a,b}\sqsubseteq \exists S}{p}$
in the construction of $\Imc_{(\obdaSpecification,\database)}$, then we know that $p$ must occur in
the monomials associated to tuples containing this anonymous individual.
On the other hand, if another anonymous element is connected to $b$ after
applying an inclusion $\pair{\exists R^-_{a,b}\sqsubseteq \exists S}{v}$
in the construction of $\Imc_{(\obdaSpecification,\database)}$, then we know that now $v$ must occur in
the monomials  associated to tuples containing this other anonymous individual.
We never re-apply these inclusions containing fresh roles. So $p$ and $v$
 mark the anonymous part of $\Imc_{(\obdaSpecification,\database)}$ connected to
$a$ and $b$, respectively.
By definition of $(\obdaSpecification,\dataInstance)$, $v$ and $u$ are distinct variables of a semiring,
this means that the last element of each tuple in $\Imc_{(\obdaSpecification,\database)}$
is  mathematically distinct from the others.




Let $\Imc,\Jmc$ be annotated interpretations.
A \emph{homomorphism}
 is a function $h:\Imc\rightarrow\Jmc$ from $\Delta^\Imc$ to $\Delta^\Jmc$
such that:
\begin{itemize}[itemindent=0mm,leftmargin=2.5mm,labelsep=1mm]
\item $h(a^\Imc)=a^\Jmc$ for all $a\in \NI\cup\NM$;
and
\item  $(\vec{a},p^\Imc)\in E^\Imc$
implies $(h(\vec{a}),h(p^\Jmc))\in E^\Jmc$, for all $E\in\NC\cup\NR$ and
$p\in\NM$;
\end{itemize}
where $\vec{a}=a^\Imc$, if $E\in\NC$, and
$\vec{a}=(a^\Imc,a'^{\Imc})$,
with $h(\vec{a})=(h(a^\Imc),h(a'^\Imc))$, if $E\in\NR$,
for $a,a'\in\NI$.
We write $\Imc\rightarrow\Jmc$ if there is
a  homomorphism from \Imc to \Jmc.
\begin{restatable}{lemma}{Lemmanotequal}\label{lem:not-equal}
Let $(\obdaSpecification,\dataInstance)$ be a satisfiable \marked OBDA
instance (for some OBDA instance),
$q$ a BCQ and $p\in\NPr$ a polynomial.
$(\obdaSpecification,\dataInstance)\models (q,p)$
if, and only if, $\Imc_{(\obdaSpecification,\dataInstance)}\models (q,p)$.
\end{restatable}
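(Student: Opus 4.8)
The plan is to establish that the canonical model $\Imc_{(\obdaSpecification,\dataInstance)}$ is \emph{universal}: it is itself a model of $(\obdaSpecification,\dataInstance)$, and it admits a homomorphism into every model of $(\obdaSpecification,\dataInstance)$. Once universality is in place, both directions of the biconditional reduce to transferring query matches together with the monomials they produce, exploiting the fact that a homomorphism $h$ fixes the interpretation of every $u\in\NM$ (since $h(a^\Imc)=a^\Jmc$ for all $a\in\NI\cup\NM$). Thus a monomial produced by a match in one interpretation is produced, unchanged up to mathematical equality, by the image match in the other.

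For the left-to-right direction I would first verify that $\Imc_{(\obdaSpecification,\dataInstance)}$ satisfies $(\obdaSpecification,\dataInstance)$. Soundness of the chase is exactly what the rule set guarantees: each rule fires only to repair a violated annotated inclusion $I\in\Omc$, and it inserts a tuple whose monomial is $(p\otimes v)^{\Imc_{(\obdaSpecification,\dataInstance)}}$ precisely as demanded by the satisfaction clause for $\pair{B\sqsubseteq C}{p}$ and its role analogue, while the assertions of $\mapping(\dataInstance)$ are present in $\Imc_0$; fairness of rule application ensures no violation survives in the limit. Hence $\Imc_{(\obdaSpecification,\dataInstance)}\models(\obdaSpecification,\dataInstance)$. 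Since $(\obdaSpecification,\dataInstance)\models(q,p)$ means $(\obdaSpecification,\dataInstance)\models q$ together with $p\subseteq\p{\Imc}{q}$ for \emph{all} models $\Imc$, instantiating these statements at $\Imc_{(\obdaSpecification,\dataInstance)}$ yields $\Imc_{(\obdaSpecification,\dataInstance)}\models q$ and $p\subseteq\p{\Imc_{(\obdaSpecification,\dataInstance)}}{q}$, i.e.\ $\Imc_{(\obdaSpecification,\dataInstance)}\models(q,p)$.

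For the converse, assume $\Imc_{(\obdaSpecification,\dataInstance)}\models(q,p)$ and let \Imc be an arbitrary model of $(\obdaSpecification,\dataInstance)$. The core step is to construct $h:\Imc_{(\obdaSpecification,\dataInstance)}\rightarrow\Imc$ by induction on the stages $\Imc_0,\Imc_1,\ldots$ of the construction: on $\Imc_0$ put $h(a)=a^\Imc$ and $h(u^{\Imc_0})=u^\Imc$, which is a homomorphism because \Imc satisfies $\mapping(\dataInstance)$; in the inductive step, whenever a rule repairs an inclusion $I$, the fact that $\Imc\models I$ forces a matching (named or anonymous) witness in \Imc carrying a mathematically equal monomial, and $h$ is extended to it. Since $h$ preserves atom membership and fixes every monomial, each match $\pi\in\nu_{\Imc_{(\obdaSpecification,\dataInstance)}}(q)$ composes to a match $h\circ\pi\in\nu_\Imc(q)$ producing the same product of monomials. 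Therefore $\Imc\models q$, and every monomial occurrence witnessing $p\subseteq\p{\Imc_{(\obdaSpecification,\dataInstance)}}{q}$ is carried to an occurrence in $\p{\Imc}{q}$, giving $p\subseteq\p{\Imc}{q}$. As \Imc was arbitrary, $(\obdaSpecification,\dataInstance)\models(q,p)$.

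The main obstacle is the inductive construction of $h$ \emph{while tracking provenance}: one must show that every chase-generated tuple, with its accumulated monomial $(p\otimes v)^{\Imc_{(\obdaSpecification,\dataInstance)}}$, has an image in \Imc whose monomial is mathematically equal, which hinges on the annotated semantics of inclusions rather than on their classical \DLLITE reading. A secondary delicate point concerns \emph{multiplicities}: to send each occurrence of a repeated monomial of $p$ to a distinct occurrence in $\p{\Imc}{q}$, I would invoke the \marked hypothesis, which (as argued before this lemma) makes the last element of every tuple of $\Imc_{(\obdaSpecification,\dataInstance)}$ mathematically distinct, so that distinct witnessing matches cannot be conflated at the level of monomials.
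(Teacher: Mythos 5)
Your proposal is correct and takes essentially the same route as the paper's proof: the forward direction instantiates entailment at the canonical model, which is a model of $(\obdaSpecification,\dataInstance)$ by construction, and the converse uses the standard homomorphism $\Imc_{(\obdaSpecification,\dataInstance)}\rightarrow\Jmc$ into an arbitrary model together with the \marked hypothesis to transfer matches with the correct multiplicities. The only cosmetic difference is that the paper invokes markedness to conclude that the homomorphism is injective (hence a one-to-one correspondence between matches and their annotations), whereas you argue the same point directly at the level of monomial occurrences; both are the identical use of the mathematically-distinct-annotations property, and your fuller elaboration of the chase soundness and the stagewise construction of $h$ merely fills in details the paper leaves implicit.
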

\begin{proof}
By assumption, $(\obdaSpecification,\dataInstance)$ is satisfiable and so,
by construction, $\Imc_{(\obdaSpecification,\dataInstance)}$ is a model
of $(\obdaSpecification,\dataInstance)$.
So  $(\obdaSpecification,\dataInstance)\models (q,p)$ implies
 $\Imc_{(\obdaSpecification,\dataInstance)}\models (q,p)$.
 The converse direction follows from the standard notion that
 if an interpretation \Jmc (here annotated) satisfies
 $(\obdaSpecification,\dataInstance)$ then there is a homomorphism 
    $\Imc_{(\obdaSpecification,\dataInstance)}\rightarrow\Jmc$.
 Since the last element of each tuple in $\Imc_{(\obdaSpecification,\database)}$
is  mathematically distinct from the others
     $\Imc_{(\obdaSpecification,\dataInstance)}\rightarrow\Jmc$
can only be injective. So not only $\Jmc\models q$ but there is also a
$1$ to $1$ correspondence between
the   matches of $q$ in $\Imc_{(\obdaSpecification,\database)}$, with the respective annotations,
and the matches of $q$ in \Jmc.
Thus, if $\Imc_{(\obdaSpecification,\database)}\models (q,p)$
then $\Jmc\models (q,p)$. As \Jmc is an arbitrary annotated intepretation
satisfying $(\obdaSpecification,\database)$, it follows that
$(\obdaSpecification,\dataInstance)\models (q,p)$.
\end{proof}


We are now ready to prove Lemma~~\ref{lem:not-equal-marked}.

\Lemmanotequalmarked*


\begin{proof}
We first argue that if $(\obdaSpecification_m,\database_m)$
is \marked for some OBDA instance $(\obdaSpecification,\database)$ then
$(\obdaSpecification_m,\database_m)$ only entails
annotated queries $(q_m,p_m)$ such that
any two monomials $p_1,p_2$ appearing in $p_m$ 
are mathematically distinct.
By assumption $(\obdaSpecification,\database)$ is satisfiable,
 so $(\obdaSpecification_m,\database_m)$ is satisfiable and, moreover,
$\Imc_{(\obdaSpecification,\database)}$ is a model of $(\obdaSpecification,\database)$.
Assume that $p\in\NPr$ contains two monomials which are mathematically equal.
Then, for any BCQ $q$, $(\obdaSpecification,\database)\models (q,p)$ iff
$(\obdaSpecification,\database)\models q$ and for every model \Imc of $(\obdaSpecification,\database)$,
$p\subseteq \p{\Imc}{q}$. This means that for each occurrence of a monomial in $p$
we find an occurrence of it in $\p{\Imc}{q}$.
However,  for $\Imc=\Imc_{(\obdaSpecification,\database)}$, we know that we cannot find
two monomials which are mathematically equal in $\p{\Imc}{q}$.

Now we argue about the second point of this lemma.
Let $(\obdaSpecification_m,\database_m)$
be a \marked OBDA instance for $(\obdaSpecification,\database)$
and let $\dagger:\NV\rightarrow\NV$ be the function
that maps $v\in\NV$ occurring in $(\obdaSpecification_m,\database_m)$
to $\dagger(v)$ in $(\obdaSpecification,\database)$.
 Given $p_m\in\NPr$ and a function $\dagger:\NV\rightarrow\NV$,
$p^\dagger_m$
is the result of simultaneously replacing each occurrence of
$v\in\NV$  in $p_m$ by $\dagger(v)$.
Similarly, given a query $q_m$ and
$q^\dagger_m$
is the result of simultaneously replacing each occurrence of
$v\in\NV$  in $q_m $
by $\dagger(v)$ and, in addition, we   replace each
$R_{a,b}$ by $R$, where $R\in\NR$ and $a,b\in\NI$.
We  use $\dagger$ to define a mapping between interpretations.
Given 
 an annotated interpretation $\Imc=(\Delta^\Imc,\Delta^\Imc_\sfm,\cdot^\Imc)$,
we
define
   $\Imc^\dagger$ as 
$(\Delta^{\Imc},\Delta^{\Imc}_\sfm,\cdot^{\Imc^\dagger})$
with 
$\cdot^{\Imc^\dagger}$ satisfying:
\begin{itemize}
\item $a^{\Imc_\dagger}=a^\Imc\in\Delta^\Imc$;
\item $p^{\Imc_\dagger}=(p^\dagger)^\Imc \in\Delta^\Imc_\sfm$;
\item $A^{\Imc^\dagger}=\{(d,(p^\dagger)^\Imc)\mid
(d,p^\Imc)\in A^{\Imc} \}$;
\item $R^{\Imc^\dagger}= \
  \begin{array}[t]{@{}l}
    \{(d,e,(p^\dagger)^\Imc)\mid (d,e,p^\Imc)\in R^{\Imc}\} \cup{}\\
    \{(a^\Imc,b^\Imc,(p^\dagger)^\Imc)\mid (a^\Imc,b^\Imc,p^\Imc)\in
    R_{a,b}^\Imc\},
  \end{array}$
\end{itemize}
for every $a,b\in\NI$, $p\in \NM$, $A\in\NC$ and $R\in\NR$.  The following
claim can be proved by structural induction.

\begin{claim}\label{cl:one}
Let $(\obdaSpecification_m,\database_m)$
be a \marked OBDA instance for $(\obdaSpecification,\database)$
and let $\dagger:\NV\rightarrow\NV$ be the function
that maps $v\in\NV$ occurring in $(\obdaSpecification_m,\database_m)$
to $\dagger(v)$ in $(\obdaSpecification,\database)$.
For every annotated interpretation \Imc, the following holds:
\begin{itemize}
\item if $\Imc\models (\obdaSpecification_m,\database_m)$ then
$\Imc^\dagger\models (\obdaSpecification,\database)$; and
\item if $\Imc\models (\obdaSpecification,\database)$ then there is
$\Jmc$ such that $\Imc=\Jmc^\dagger$ and
 $\Jmc\models (\obdaSpecification_m,\database_m)$.
\end{itemize}
\end{claim}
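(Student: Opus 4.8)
The plan is to prove the two bullet points separately, in each case reducing satisfaction of an OBDA instance to its two constituents --- the virtual assertions and the ontology axioms --- and verifying these one family at a time. The engine of both directions is an auxiliary statement, proved by structural induction on the \DLLITE concept and role expressions over the signature of $(\obdaSpecification,\database)$, that rewrites the extensions of $\Imc^\dagger$ in terms of $\Imc$. Concretely, I would first establish that for every original role $S$,
\[
\begin{aligned}
S^{\Imc^\dagger}={}&\{(d,e,(p^\dagger)^\Imc)\mid (d,e,p^\Imc)\in S^\Imc\}\cup{}\\
&\textstyle\bigcup_{a,b\in\NI}\{(a^\Imc,b^\Imc,(p^\dagger)^\Imc)\mid (a^\Imc,b^\Imc,p^\Imc)\in S_{a,b}^\Imc\},
\end{aligned}
\]
together with the analogous identities for $(S^-)^{\Imc^\dagger}$, $(\exists S)^{\Imc^\dagger}$ and $B^{\Imc^\dagger}$, obtained by unfolding the semantic clauses through the definition of $\cdot^{\Imc^\dagger}$. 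Along the way I would check that $\Imc^\dagger$ is a legitimate annotated interpretation: since $\Imc$ interprets mathematically equal monomials identically and $\dagger$ respects mathematical equality, the value $(p^\dagger)^\Imc$ depends only on the class of $p^\dagger$, so $\cdot^{\Imc^\dagger}$ is well defined on the original monomials.

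For the first bullet I would assume $\Imc\models(\obdaSpecification_m,\database_m)$ and verify the two constituents of $(\obdaSpecification,\database)$. The virtual assertions are immediate: a concept assertion $\pair{A(a)}{p}$ of the original corresponds to a marked assertion $\pair{A(a)}{p_m}$ with $p_m^\dagger=p$, and $(a^\Imc,p_m^\Imc)\in A^\Imc$ yields $(a^\Imc,(p_m^\dagger)^\Imc)\in A^{\Imc^\dagger}$ by definition; a role assertion $\pair{R(a,b)}{p}$ is handled through the second component of $R^{\Imc^\dagger}$, via the marked assertion $\pair{R_{a,b}(a,b)}{p_m}$. For the axioms I would take an original inclusion, say $\pair{\exists S\sqsubseteq C}{p}$ (the other shapes being analogous), and use the rewriting lemma: any witness $(d,q^{\Imc^\dagger})\in(\exists S)^{\Imc^\dagger}$ arises either from a genuine $S$-tuple of $\Imc$, in which case the marked copy of the same inclusion (annotated by some $v$ with $v^\dagger=p$) forces the consequent, or from an $S_{a,b}$-tuple, in which case the \emph{added} inclusion $\pair{\exists S_{a,b}\sqsubseteq C}{v}$ that $\Imc$ satisfies forces it; folding the resulting membership back through $\cdot^{\Imc^\dagger}$ gives $(d,(q\otimes p)^{\Imc^\dagger})\in C^{\Imc^\dagger}$. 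Role inclusions are treated identically using the added role inclusions $\pair{R_{a,b}^{(-)}\sqsubseteq S_{a,b}^{(-)}}{v}$.

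For the second bullet I would conversely build a lift $\Jmc$ over the marked signature from a given model $\Imc$ of $(\obdaSpecification,\database)$. I keep $\Delta^\Jmc=\Delta^\Imc$ and $\Delta^\Jmc_\sfm=\Delta^\Imc_\sfm$, fix a lift sending each original monomial to a marked monomial over the fresh variables so that $\dagger$ inverts it up to mathematical equality, interpret $A^\Jmc$ as the $\dagger$-preimage of $A^\Imc$, place in $R_{a,b}^\Jmc$ exactly the lifted tuples demanded by the marked assertions $\pair{R_{a,b}(a,b)}{p_m}$, and let $R^\Jmc$ carry the remaining tuples of $R^\Imc$. One then checks $\Jmc^\dagger=\Imc$ directly from these definitions, and that $\Jmc$ satisfies the marked instance: the original axioms and the marked assertions transfer back from $\Imc$, while each added inclusion about a fresh role $R_{a,b}$ is discharged from the corresponding original inclusion about $R$, since every $R_{a,b}$-tuple of $\Jmc$ lies over an assertion whose $R$-image is present in $\Imc$.

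The main obstacle I anticipate is neither direction's assertion part but the coherent bookkeeping of monomials and fresh roles across $\dagger$. Because $\dagger$ is not injective, I must ensure that (i) the rewriting identity for $(\exists S)^{\Imc^\dagger}$ merges the $S$- and $S_{a,b}$-contributions without conflating their monomials, and (ii) in the second bullet the chosen lift keeps $\Jmc$ a legitimate annotated interpretation --- mathematically distinct marked monomials must receive distinct values --- which is precisely where the freshness and uniqueness of the variables introduced by the marking (condition~4 of the construction) is essential. Making the consequent annotation $q\otimes p$ match on both sides of $\dagger$, rather than merely the underlying DL facts, is the delicate point that the structural induction must track throughout.
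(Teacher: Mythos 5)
Your plan coincides with the paper's: the paper's entire proof of this claim is the single remark that it ``can be proved by structural induction,'' and your proposal is exactly that induction --- unfolding each semantic clause through $\cdot^{\Imc^\dagger}$, splitting the role extensions into their $R$- and $R_{a,b}$-contributions, verifying virtual assertions and inclusions separately, and lifting a model of the original instance to the marked signature for the converse bullet. If anything, you supply considerably more detail than the paper does, in particular the bookkeeping of monomials under the non-injective $\dagger$ and the role of the fresh, unique variables from the marking, all of which the paper leaves implicit.
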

We first show that 
 if $(\obdaSpecification_m,\database_m)\models (q_m,p_m)$
 and $(q^\dagger_m,p^\dagger_m)=(q,p)$ hold,
 then
 $(\obdaSpecification,\database)\models (q,p)$.
 Assume that $(\obdaSpecification_m,\database_m)\models (q_m,p_m)$ and $(q^\dagger_m,p^\dagger_m)=(q,p)$.
If $\Imc\models (\obdaSpecification,\database)$ then,
 by Claim~\ref{cl:one}, there is
$\Jmc$ such that $\Imc=\Jmc^\dagger$ and
 $\Jmc\models (\obdaSpecification_m,\database_m)$.
 As $(\obdaSpecification_m,\database_m)\models (q_m,p_m)$, we have that
 $\Imc_{(\obdaSpecification_m,\database_m)}\models(q_m,p_m)$
 and there is a homomorphism from $\Imc_{(\obdaSpecification_m,\database_m)}$
 to \Jmc. By definition of $(\obdaSpecification_m,\database_m)$,
 a tuple (in a relation) can only be associated with multiple annotations
 in $\Imc_{(\obdaSpecification_m,\database_m)}$
 if this tuple is in $\Imc_0$, meaning that it is in the image
 of an   assertion in $\Mmc_m(\database_m)$, with $\Pmc_m=(\Omc_m,\Mmc_m,\Smc_m)$.
 Moreover, $(\obdaSpecification_m,\database_m)$ has the same number of
 assertions as in $(\obdaSpecification,\database)$, the only difference
 is the renaming of roles in Point 1, and the renaming of annotations in Point 4.
 Thus, if a tuple (in a relation) is associated with $k$ annotations
 in $\Imc_{0}$  then
 the corresponding assertion is also associated with $k$ annotations
 in $(\obdaSpecification,\database)$, where the mapping between annotations
 is given by $\dagger$. So
 $\Imc^\dagger_{(\obdaSpecification_m,\database_m)}$   respects
 the multiplicity of   assertions in  $\Imc_{(\obdaSpecification_m,\database_m)}$.
 Thus,
  $\Imc_{(\obdaSpecification_m,\database_m)}\models (q_m,p_m)$ implies
  $\Imc^\dagger_{(\obdaSpecification_m,\database_m)}\models (q^\dagger_m,p^\dagger_m)$.
 By definition of $\Imc^\dagger_{(\obdaSpecification_m,\database_m)}$,
 there is a homomorphism from $\Imc^\dagger_{(\obdaSpecification_m,\database_m)}$ to
 $\Jmc^\dagger$. Since $(q^\dagger_m,p^\dagger_m)=(q,p)$ and $\Jmc^\dagger=\Imc$
 we have that $\Imc\models (q,p)$. As \Imc
 is an arbitrary annotated interpretation satisfying
 $(\obdaSpecification,\database)$,
 we conclude that $(\obdaSpecification,\database)\models (q,p)$.

 We now show that if $(\obdaSpecification,\database)\models (q,p)$ holds,
 then  $(\obdaSpecification_m,\database_m)\models (q_m,p_m)$,
 %
 for some     query $(q_m,p_m)$ and
  function $\dagger:\NV\rightarrow\NV$
  mapping $v\in\NV$   in $(\obdaSpecification_m,\database_m)$
to $\dagger(v)$ in $(\obdaSpecification,\database)$ and
  such that
 $(q,p)=(q^\dagger_m,p^\dagger_m)$.
%
 Assume $(\obdaSpecification,\database)\models (q,p)$.
In this lemma we assume that $(\obdaSpecification,\database)$
is satisfiable, and so, for any
marked
$(\obdaSpecification_m,\database_m)$ for  $(\obdaSpecification,\database)$,
it follows from the definition  that $(\obdaSpecification_m,\database_m)$ is satisfiable.
This means that
a canonical model $\Imc_{(\obdaSpecification_m,\database_m)}$
of $(\obdaSpecification_m,\database_m)$ satisfies
 $(\obdaSpecification_m,\database_m)$.
Then, by Claim~\ref{cl:one},
$\Imc^\dagger_{(\obdaSpecification_m,\database_m)}\models(\obdaSpecification,\database)$.
As $(\obdaSpecification,\database)\models (q,p)$,
we have that $\Imc^\dagger_{(\obdaSpecification_m,\database_m)}\models(q,p)$.
The fact that  $\Imc_{(\obdaSpecification_m,\database_m)}\models(q_m,p_m)$ for some
 $(q_m,p_m)$
 such that $(q,p)=(q^\dagger_m,p^\dagger_m)$
 follows from the construction of  $\Imc^\dagger_{(\obdaSpecification_m,\database_m)}$.
Since $\Imc_{(\obdaSpecification_m,\database_m)}$
is a canonical model of $(\obdaSpecification_m,\database_m)$, by
Lemma~\ref{lem:not-equal},
 $(\obdaSpecification_m,\database_m)\models (q_m,p_m)$.

Finally, as $(\obdaSpecification_m,\database_m)$
is \marked for $(\obdaSpecification,\database)$, it is easy to see that
$|(\obdaSpecification_m,\database_m)|$
is polynomial in $|(\obdaSpecification,\database)|$, and therefore,
also in $|(\obdaSpecification,\database)|+|(q,p)|$.
Regarding the size of $(q_m,p_m)$,
one can easily see that it depends polynomially on
the size of $(q,p)$,
since $(q,p)$ is the result of replacing role
names and variables from \NV in  $(q_m,p_m)$.
Dependence on $|(\obdaSpecification,\database)|$ is due
to the   function $\dagger$ and
the requirement that annotations
are unique in $(\obdaSpecification_m,\database_m)$, which affects
the number of bits necessary to encode the
annotations   occurring in $(q_m,p_m)$.
\end{proof}

%

With the help of this lemma, we can state the main result of this section; namely, that the translation
for marked ontologies is correct.
\Theoremtranslation*
\begin{proof}
By definition of entailment from an ontology, it suffices to show that,
for every annotated interpretation \Imc,
\begin{center}
 $\Imc\models (q,p)$ iff $\Imc\models \tr{q}{p}$. \hfill  ($*$)
\end{center}
Indeed, if ($*$) holds, then $(\obdaSpecification,\database)\models (q,p)$ iff for every model
\Imc of $(\obdaSpecification,\database)$ we have $\Imc\models (q,p)$
iff for every model \Imc of $(\obdaSpecification,\database)$ we have $\Imc\models \tr{q}{p}$
iff $(\obdaSpecification,\database)\models \tr{q}{p}$.

We now show the claim ($*$). Let \Imc be an arbitrary annotated interpretation.

Assume first that $\Imc\models (q,p)$, and let $n$ be the number of monomials in $p$.
Then, $\Imc\models q$ and $p\subseteq \p{\Imc}{q}$.
By definition of $\p{\Imc}{q}$, there is a
set $\chi_\Imc(q)$ consisting of $n$   matches of $q$
in \Imc such that:
\begin{equation}\label{eq}
 p=\sum_{\pi\in\chi_\Imc(q)}\prod_{P(\vec{t},t)\in q} \map^-(t). 
\end{equation}
%
Consider first the case that all terms in $q$ are variables. By definition of $\tr{q}{p}$,
there is a function $\epsilon$ mapping $\chi_\Imc(q)$ to $\{1,\dots,n\}$
and a BCQ $q'$ of the form presented in Equation~\ref{eq:trans} such that
the last term $t$ of   atom $P(\vec{t},t)$
in $q_{\epsilon(\pi)}$ is any monomial $v\in\NM$ such that $v^\Imc=\map^-(t)$.
%
Let $\pi'_{\epsilon(\pi)}$ be the result of replacing each $x\in\vec{x}$ in the domain
of $\pi_{\epsilon(\pi)}$ by $x_{\epsilon(\pi)}\in\vec{x}_{\epsilon(\pi)}$.
By definition of $q_{\epsilon(\pi)}$, each $\pi'_{\epsilon(\pi)}$ is
a match of $q_{\epsilon(\pi)}$ in \Imc.
Then, $\bigcup_{1\leq \epsilon(\pi)\leq n} \pi'_{\epsilon(\pi)}$
is a match of $q'$ in \Imc. So,  $\Imc\models \tr{q}{p}$.

Conversely, assume that $\Imc\models \tr{q}{p}$. Then there is a match $\map=\bigcup_{1\leq i\leq n} \map_i$
of some $q'\in\tr{q}{p}$ in \Imc, where each $\map_i$
is a match if the `copy' $q_i$ of $q$ in \Imc, $1\leq i\leq n$.
Let $\map'_i$ be the result
of replacing each variable $x_i\in\vec{x}_i$ by $x\in\vec{x}$,   $1\leq i\leq n$.
By definition of $q_i$ and $q$, each $\map'_i$ is a match of $q$ in \Imc.
Moreover, since  any two monomials in $p$ are   not mathematically equal,
  $\map'_1,\ldots,\map'_n$ are all distinct.
By definition of $q_i$, the product $p_i$ of the polynomials in the domain of $\map'_i$
is a monomial in $p$, and the sum of the monomials $p_1,\ldots,p_n$
is equal to $p$.
We then obtain the same equality of Equation~\ref{eq}.
Thus, $p\subseteq \p{\Imc}{q}$.
The proof for BCQs with individual names is similar, except that
`copies' of the query contain individual names in the corresponding
positions.
%
%
%
\end{proof}

Theorem~\ref{thm:combined1} establishes the main properties Algorithm~\ref{alg:perfectRef}.
As already mentioned, termination can be proved using
a similar argument as the one used for \perfectRef~\cite[Lemma 34]{dl-lite}.
The important point is about the correctness of the rewritings,
which we argue next, as part of our proof for Theorem~\ref{thm:combined}.

\Theoremqueryrewriting*
\begin{proof}
The lower bound follows from \NP-hardness of conjunctive query answering in
relational databases. We argue that the problem is in \NP.
Let $(\obdaSpecification,\database)$ be
an
 OBDA instance 
 and let $(q,p)$ be a  query.
By Lemma~\ref{lem:not-equal-marked},
there is
 $(\obdaSpecification_m,\dataInstance_m)$ and
   $(q_m,p_m)$
    such that
\begin{itemize}
\item $(\obdaSpecification,\dataInstance)\models (q,p)$ iff
$(\obdaSpecification_m,\dataInstance_m)\models (q_m,p_m)$;
\item for any two monomials $p_1,p_2$ appearing in $p_m$, it holds that $p_1$ and $p_2$
are mathematically distinct; and
\item  $|(\obdaSpecification_m, \database_m)|+|(q_m,p_m)|$ is polynomially bounded by
$|(\obdaSpecification,\database)|+|(q,p)|$.
\end{itemize}
By Theorem~\ref{thm:trans},
 for such polynomials $p_m\in\NPr$,
 $(\obdaSpecification_m,\database_m)\models (q_m,p_m)$ iff
there is $q'\in \tr{q_m}{p_m}$ s.t.\ $(\obdaSpecification_m,\database_m)\models q'$.
Then, to establish our upper bound we proceed as follows.

Given an   OBDA  instance $(\obdaSpecification,\database)$
and a  query $(q,p)$, we first check in $\LogSpace\subseteq \PTime$
satisfiability of  $(\obdaSpecification,\database)$~\cite{ArtEtAl1}.
If $(\obdaSpecification,\database)$ is unsatisfiable
then for all   queries $(q,p)$, we  have that
$(\obdaSpecification,\database)\models (q,p)$ holds trivially.
Then, assume $(\obdaSpecification,\database)$ is satisfiable.
We guess  an OBDA  instance $(\obdaSpecification_m,\database_m)$
 marked for $(\obdaSpecification,\database)$,
 a  query $(q_m,p_m)$, and
  $q'\in{\sf Tr}(q_m,p_m)$ such that
  $(\obdaSpecification,\database)\models (q,p)$
iff
$(\obdaSpecification_m,\database_m)\models q'$.
 %
 By construction of $q'$, $|q'|$ is polynomial in
 $|(q_m,p_m)|$, which in turn is polynomial in
  $|(\obdaSpecification,\database)|+|(q,p)|$ (Lemma~\ref{lem:not-equal-marked}).

  We now adapt the query rewriting
  algorithm \perfectRef~\cite{dl-lite} to decide whether
  $(\obdaSpecification_m,\database_m)\models q'$.
We 
 guess a rewriting  $q^\ddagger$ of $q'$,
 and
 guess a sequence of pairs $(I,n)$ where $I$ is a positive inclusion
 and $n$ is an identifier for an atom position in a query.
 Each $(I,n)$   represents a transformation of  \perfectRef
 on a query.
There is a  non-deterministic version of \perfectRef which would follow this sequence of
transformations and return   $q^\ddagger$, with  $q'$ as input.
The sequence of transformations is of polynomial size, since every query returned
by \perfectRef can only be generated after a polynomial number of transformations
of the initial query.
By definition of
\perfectRef, each $q^\ddagger \in \perfectRef(q',\Omc_\Tmc)$ is   polynomial
in $|q'|$, where $\obdaSpecification=(\Omc,\mapping,\Smc)$ and
$\Omc_\Tmc$ is the set of positive
inclusions in \Omc.

Membership in \NP follows from the fact that
 we can check in polynomial time that:
 \begin{enumerate}
\item $q'\in \tr{q_m}{p_m}$;
\item $q^\ddagger \in \perfectRef(q',\Omc_\Tmc)$
(using the sequence of transformations);
\item
$p$ is the result of
replacing each occurrence of
$v\in\NV$  in $p_m $ by $\dagger(v)$,
where $\dagger$ is a function 
that maps $v\in\NV$ occurring in $(\obdaSpecification_m,\database_m)$
to $\dagger(v)$ in $(\obdaSpecification,\database)$;
\item $q$ is the result of  each occurrence of
$v\in\NV$  in $q_m $ by $\dagger(v)$,
 in addition to replacing
roles $R_{a,b}$ by $R$ in $q_m$, where $\dagger$ is as in Point (3);
and
\end{enumerate}
our modified  algorithm  is correct.
Correctness   is shown  as 
 in~\cite{dl-lite}, but here we change
the notion of   applicability of a positive inclusion $I$ to an atom $g$
and the definition of $gr(g,I)$ (Definition~\ref{def:atom}) to ensure that
each transformation respects the semantics of annotated \DLLITE.
\end{proof}

\section{Proofs for Section~\ref{sec:idem}}

\Propositionexponential*
\begin{proof}
Consider the     OBDA instance with the ontology \Omc containing the axioms,
\[
  \begin{array}{l}
    (A\sqsubseteq B_1,x),  (A\sqsubseteq C_1,x), (B_n\sqsubseteq D,x),\\[1mm]
    (B_i\sqsubseteq B_{i+1},x_i),  (B_i\sqsubseteq C_{i+1},y_i),
    (C_i\sqsubseteq B_{i+1},x_i),\\[1mm]
    (C_i\sqsubseteq C_{i+1},y_i), (C_n,\sqsubseteq D,x),
  \end{array}
\]
for $1\le i<n$, and $\Mmc(\Dmc)=\{(A(a), p)\}$.
Consider the simple query $q=D(a)$. Every monomial $p=x\otimes\prod_{i=1}^n z_i$, where
each $z_i\in\{x_i,y_i\}$, is such that $\Omc\models(q,p)$ (and none other). Hence, the polynomial of the query
$q$ is formed by the sum of $2^n$ different monomials; that is, it is exponential on the size of the ontology.
\end{proof}

\PropIdempotent*
\begin{proof}
Let $N$ be a set with $n$ concept names such that $A,B\in N$. Let
$((\Omc,\Mmc,\Smc),\Dmc)$ be an   OBDA instance
with
$$\Omc=\{(C\sqsubseteq D,x_{CD})\mid C,D\in N\}
\text{, } \ \Mmc(\Dmc)=\{A(a),x_A\}.$$ Consider  the query $q=B(a)$. We have that
\Omc simulates a complete graph over the nodes in $N$. Every derivation of $B(a)$ represents a path
from $A$ to $B$ in this graph. Hence, if the provenance of $q$ could be expressed with polynomial space,
there would be a monotone Boolean formula representing all the paths from $A$ to $B$ in the complete
graph with $n$ nodes, contradicting the results from~\cite{KaWi88,KaWi90}.
\end{proof}

\TheoremComputingProv*
\begin{proof}
Recall that in Section~\ref{sec:idem} we consider idempotent semirings.
Assume  a standard BCQ $q$ and an OBDA instance $(\Pmc,\Dmc)$ is given as
input to Algorithm~\ref{alg:computeProv} and it returns $p\in\NPr$.
Then, it suffices to show that
every monomial $m\in\NM$ in $p$
 is such that $(\Pmc,\Dmc)\models (q,m)$ and,
conversely, if there is $m'\in\NM$ such that
$(\Pmc,\Dmc)\models (q,m')$ then there is a monomial in $p$ that is mathematically equal to $m'$
(by `monomial in a polynomial $p$' we
mean that the monomial is  one of the elements of the sum of monomials, given that
$p$ is in expanded form).

By definition of Algorihm~\ref{alg:computeProv},
if $m\in\NM$ is a monomial  in $p$ then
there is a match of $q'_{\vec{y}}$ where $q'$ is a rewriting
of $q^\star$ in $\Imc_{\Mmc(\Dmc)}$.
By Definition~\ref{def:atom2}, the last term of each atom
of the rewriting $q'$ is associated with the product
of $\star$ and the annotations of the inclusions
used to derive the atom.
We then replace $\star$ in each atom by the
corresponding annotation of the tuple in the match.
By soundness of the algorithm \perfectRef~\cite{dl-lite}
and the semantics of annotated interpretations,
 $(\Pmc,\Dmc)\models (q,m)$.

Conversely, if $(\Pmc,\Dmc)\models (q,m)$ then, by
the semantics of annotated interpretations,
$m$ corresponds to a derivation of $q$
using axioms of $(\Pmc,\Dmc)$. By completeness of
the algorithm \perfectRef~\cite{dl-lite}, there
is a query rewriting $q'$ of $q^\star$ in $PR$, following
  Definition~\ref{def:atom2},
and a match of $q'_{\vec{y}}$ in
$\Imc_{\Mmc(\Dmc)}$ such that  $m$ is the product
of the
annotations
resulting from replacing
$\star$ in the last term of each atom
of $q'$ by the corresponding annotation of
the tuple in the match.
By definition of Algorihm~\ref{alg:computeProv},
  $m$  is   added to the polynomial returned by the algorithm. 
\end{proof}

\section{Algorithm for Section~\ref{sec:evaluation}}

We first briefly recall the query answering algorithm used in \ontop,
and introduce the ProvSQL extension of PostgreSQL. Then we explain how
\ontoprov extends \ontop with the support for provenance.

\subsection{ProvSQL}
\label{sec:provsql}

The ProvSQL project is a PostgreSQL extension to add support for
(m-)semiring provenance. It supports semiring provenance, with or
without monus (m-semiring). Note that the monus operator is beyond the
semiring framework considered here. ProvSQL adds an extra
\textsl{provsql} column to each table in the database.
This column associates to each tuple a \emph{universally unique identifier} (UUID)
as a provenance token. Likewise, each answer is also associated to a UUID. ProSQL providdes several functions to intepret these tokens into several kinds of provenance information. The next example shows how ProvSQL works.

\begin{example}\label{ex:provsql}
Consider the following two tables

\begin{center}
  \footnotesize
    \begin{tabular}{lllll}
      \multicolumn{5}{c}{EMP} \\
      \toprule
      \underline{EMPNO} & ENAME & JOB & DEPTNO & \textsl{provsql} \\ \midrule
      7367 & SMITH & CLERK & 10 & t11 \\
      9527 & JOHN  & HR & 10 & t12 \\
      4839 & MARY  & PROGR & 10 & t13 \\
      4839 & RALPH  & SYSADM & 10 & t14 \\
      \bottomrule
    \end{tabular}
    \\~\\~\\
    \begin{tabular}{llll}
      \multicolumn{4}{c}{DEPT} \\
      \toprule
      \underline{DEPTNO} & DNAME & LOC & provsql\\ \midrule
      10 & APPSERVER & NEW YORK & t21\\
      \bottomrule
    \end{tabular}
\end{center}
By calling the \texttt{add\_provenance} functions, ProvSQL adds a column \texttt{provsql} to both tables, add generates a unique provenance token for each tuple.

Consider the SQL query:

\begin{lstlisting}[basicstyle=\footnotesize\ttfamily]
SELECT e.ENAME, d.DNAME
FROM EMP e, DPT d
WHERE e.DEPTNO = d.DEPTNO
   AND JOB='PROGR'
\end{lstlisting}
To this query ProvSQL returns the following result

\begin{center}
  \begin{tabular}{lll}
    \toprule
    ENAME & DNAME & provsql\\ \midrule
    Mary & APPSERVER & t31\\
    \bottomrule
  \end{tabular}
\end{center}

The additional column \textsl{provsql} is added automatically by ProvSQL, and it contains a fresh token which encodes the provenance information for the tuple in the result. Such token can be then fed to ProvSQL functions, such as the \texttt{where\_provenance} function, to decode the desired provenance information. In our example, by using \texttt{where\_provenance('t31')} we would get

\begin{lstlisting}[basicstyle=\footnotesize\ttfamily]
{[EMP:t13:2],[DEPT:t21:2]}
\end{lstlisting}
saying that the result tuple has been derived from the second attributes of the two tuples t13 and t21 in the tables EMP and DEPT.
\end{example}

\subsection{Algorithm of \ontop}

\begin{algorithm}[tb]
  \caption{Ontop}\label{al:ontop}
\textbf{Input:} a query $q$, an OBDA instance $I = ((\Omc,\Mmc,\Smc),\database)$\\
\textbf{Output:} the answer to $q$ w.r.t.\ $I$

\label{alg:ontop-workflow}
\begin{algorithmic}[1]
  \STATE \text{// offline} %
  \STATE $\O' = \textit{classify}(\O)$ %
  \STATE $\M^{sat} = \textit{saturate}(\M,\O')$ %
  \STATE $\M^{sat} = \textit{optimizeM}(\M^{sat})$ %
  \STATE \text{// online} %
  \STATE $q' = \textit{unfold}(q, \M^{sat})$ %
  \STATE $Q = \textit{optimizeQ}(q',S)$ %
  \STATE \textbf{return} $\textit{eval}(Q,\D)$ %
\end{algorithmic}
\end{algorithm}

In \ontop, tree-witness rewriting is switched off by default. This because it is recognized that queries which trigger tree-witness rewriting are mostly of theoretical relevance, and extremely rare in practice (actually we are not aware of any real-world scenario in which such queries are utilized). A reason for this is that whereas in CQs it is natural to write queries with existentially quantified variables (non-answer variables are always existentially quantified), in SPARQL (under the OWL~2~QL entailment regime) doing so is more difficult.

\ontop is an OBDA system that operates over OWL~2~QL
ontologies~\cite{W3Crec-OWL2-Profiles} and
SPARQL~\cite{W3Crec-SPARQL-1.1-query} queries.  The standard semantics for such
setting is the OWL~2~QL entailment regime~\cite{W3Crec-SPARQL-1.1-entailment},
which slightly diverges from the one adopted in the DL
context~\cite{KRRXZ14}. For performance reasons, \ontop does not rely on
\perfectRef, but instead adopts a mixed strategy based on \emph{mapping
 saturation} and \emph{tree-witness rewriting}~\cite{CCKK*17}.  Mapping
saturation simulates the saturation of the ABox with respect to basic concept
and role inclusions, whereas tree-witness rewriting rewrites the query so as to
take into account those axioms with an existential on the right-hand side.


In \ontop, tree-witness rewriting is switched off by default.  Indeed, for
compliance with the OWL~2~QL entailment regime, also non-answer variables in a
SPARQL query have to match known individuals (and not existentially implied
ones).  Therefore, SPARQL queries for which applying tree-witness rewriting may
actually result in additional answers are rather unnatural (they are obtained by
constructing a complex concept expression that makes use of value restriction
within the query itself), and it is commonly recognized that they are rare in
practical scenarios (actually we are not aware of any real-world scenario in
which such queries are utilized).  Note that this differs from the semantics of
CQs, where every non-answer variable can match also existentially implied
individuals.  Therefore it is more natural to write CQs for which applying
tree-witness rewriting has a concrete impact on query answering.

We have adapted the mapping saturation process to our setting with provenance,
so as to support query answering with provenance in relevant practical
scenarios, and be able to measure the performance in them. A full
implementation that also adapts the tree-witness rewriting is nevertheless
important and will be part of future work, but it is not relevant for the
(real-world) tests in this work.
The \ontop algorithm, limited to the mapping saturation approach, is outlined
in Algorithm~\ref{alg:ontop-workflow}. The algorithm takes as inputs an OBDA
instance $I$ and a SPARQL query $q$, and returns the answers to $q$ w.r.t.\
$I$.  The workflow can be divided into an offline stage and an online stage.
During the offline stage (i.e., system start-up), \ontop first
classifies the ontology $\O$. The result of the
classification is a complete hierarchy of classes and properties,
which is stored in-memory as a directed acyclic graph.  Then it
compiles the classified ontology into the input mapping $\M$, thus
obtaining the saturated mapping $\M^{sat}$, also known as the T-mapping.

During the online stage (i.e., query execution), \ontop transforms the
input SPARQL query $q$ into an SQL query $q'$ by exploiting the saturated-mapping $\M^{sat}$, and then produces an optimized SQL query $Q$ by exploiting the
database integrity constraints $\S$. Finally, $Q$ is evaluated over the database
instance $\D$.

Next example shows the steps from Algorithm~\ref{al:ontop}.

\begin{example}\label{ex:ontop}
Consider the setting from Example~\ref{ex:provsql}. Consider the following ontology stating that programmers are employees.

\begin{lstlisting}[basicstyle=\scriptsize\ttfamily,breaklines=true,frame=tb]
ax1 SubClassOf(:Programmer, :Employee).
\end{lstlisting}

Consider the following mapping assertions:

\begin{lstlisting}
MapID:  m1
Target: triple(:emp/{EMPNO},rdf:type,:Employee).
        triple(:emp/{EMPNO},ex:name,{ENAME}).
        triple(:emp/{EMPNO},ex:dept,:dept/{DEPTNO}).
Source: SELECT * FROM EMP

MapID:  m2
Target: triple(:dept/{DEPTNO},rdf:type,:Department).
        triple(:dept/{DEPTNO},ex:name,{DNAME}).
        triple(:dept/{DEPTNO},ex:loc, {LOC}).
Source: SELECT * FROM DEPT

MapID: m3
Target: triple(:emp/{EMPNO},rdf:type,:Programmer).
Source: SELECT * FROM EMP WHERE JOB='PROGR'
\end{lstlisting}

After ontology classification and mapping saturation, the saturated set of mappings will contain the original mappings plus the following mapping:

\begin{lstlisting}
MapID: m3_ax1
Target: triple(:emp/{EMPNO},rdf:type,:Employee).
Source: SELECT * FROM EMP WHERE JOB='PROGR'
\end{lstlisting}

This mapping is derived from mapping \texttt{m3} and axiom \texttt{ax1}. Since the SQL query in \texttt{m3\_ax1} is contained in the SQL query in \texttt{m1}, the mapping is in fact redundant and gets removed in the \texttt{optimizeM} step. Therefore, the final saturated set of mappings will coincide with the original set of mappings.

Consider the following SPARQL query relating employees to departments they work in:

\begin{lstlisting}
SELECT ?eName ?dName WHERE {
   ?e rdf:type :Employee;
      ex:name  ?eName;
      ex:dept  ?d .
   ?d ex:name dName.
}
\end{lstlisting}
Such query has the following algebra tree:

\begin{lstlisting}
SELECT ?eName ?dName
   JOIN
      triple(?e, rdf:type, :Employee).
      triple(?e, ex:name, ?eName).
      triple(?e, ex:dept, ?d).
      triple(?d, ex:name, ?dName).
\end{lstlisting}

In the unfold step, each \texttt{triple} in the algebra tree is replaced by the corresponding source part in the mapping. The query $q'$ look like:

\begin{lstlisting}
SELECT e1.ENAME as eName, d.DNAME as dName,
FROM EMP e1, EMP e2, EMP e3, DPT d
WHERE e1.EMPNO=e2.EMPNO AND
      e1.EMPNO=e3.EMPNO AND
      e1.DEPTNO = d.DEPTNO AND
      e1.JOB='PROGR'
\end{lstlisting}

In the \texttt{optimizeQ} step, the redundant self-joins are removed and the final query $Q$ will be:

\begin{lstlisting}[basicstyle=\scriptsize\ttfamily,breaklines=true,frame=tb]
SELECT e.ENAME as eName, d.DNAME as dName,
FROM EMP e, DPT d
WHERE e.DEPTNO = d.DEPTNO
   AND JOB='PROGR'
\end{lstlisting}
\end{example}

\subsection{The \ontoprov System}

\ontoprov accepts a BGP query, possibly with a FILTER condition, and
returns the answer of the query together with the provenance for each
answer tuple.

Algorithm~\ref{al:ontoprov} outlines the \ontoprov approach. The main workflow is the same, but now each step has to deal also with the provenance information.

We explain the algorithm by means of an example.

\begin{algorithm}[tb]
  \caption{OntoProv}\label{al:ontoprov}
  \textbf{Input:} a query $q$, an OBDA instance $I = ((\Omc,\Mmc,\Smc),\database)$\\
  \textbf{Output:} the answer to $q$ w.r.t.\ $I$
  \label{alg:ontoprov-workflow}
\begin{algorithmic}[1]
  \STATE \text{// offline} %
  \STATE $\O' = \textit{classifyAndStorePaths}(\O)$ %
  \STATE $\M_{prov} = addMappingsProvenance(\M)$ %
  \STATE $\M' = \M \cup \M_{prov}$ %
  \STATE $\M^{sat} = \textit{saturate}(\M',\O')$ %
  \STATE $\M^{sat} = \textit{optimizeM}(\M^{sat})$ %
  \STATE \text{// online} %
  \STATE $q^{\M^{sat}} = \textit{unfold}(q, \M^{sat})$ %
  \STATE $Q = \textit{optimizeQ}(q^{\M^{sat}},S)$ %
  \STATE \textbf{return} $\textit{eval}(Q,\D)$ %
\end{algorithmic}
\end{algorithm}

\begin{example}
Consider the scenario from Example~\ref{ex:ontop}. The function \texttt{classifyAndStorePaths} classifies the ontology and stores, for each (basic) concept and role in the ontology, the paths to their descendants. For our example, it stores the path:

\begin{lstlisting}
p[:Employee,:Engineer]
\end{lstlisting}

In \texttt{addMappingsProvenance}, the mappings are used to generate new mappings encoding the provenance information (i.e., mappings IDs and provenance returned by ProvSQL) in their target parts (which are now quadruples). For our example, these mappings are:

\begin{lstlisting}
MapID:  m1quad
Target: quad(:emp/{EMPNO},rdf:type,:Employee,
               :prov/mkey-m1/dkey-{provsql}).
        quad(:emp/{EMPNO},ex:name,{ENAME},
               :prov/mkey-m1/dkey-{provsql}).
        quad(:emp/{EMPNO},ex:dept,:dept/{DEPTNO},
               :prov/mkey-m1/dkey-{provsql}).
Source: SELECT * FROM EMP

MapID:  m2quad
Target: quad(:dept/{DEPTNO},rdf:type,:Department,
                :prov/mkey-m2/dkey-{provsql}).
        quad(:dept/{DEPTNO},ex:name,{DNAME},
                :prov/mkey-m2/dkey-{provsql}).
        quad(:dept/{DEPTNO},ex:loc, {LOC},
                :prov/mkey-m2/dkey-{provsql}).
Source: SELECT * FROM DEPT

MapID: m3quad
Target: quad(:emp/{EMPNO},rdf:type,:Programmer,
                  :prov/mkey-m3/dkey-{provsql}).
Source: SELECT * FROM EMP WHERE JOB='PROGR'
\end{lstlisting}

The saturated set of mappings is derived as in Example~\ref{ex:ontop}. Hence, it will contain the following mapping assertion:

\begin{lstlisting}
MapID: m3_ax1quad
Target: quad(:emp/{EMPNO},rdf:type,:Employee,
               :prov/mkey-m3/okey-p[:Programmer,:Employee]/dkey-{provsql}).
Source: SELECT * FROM EMP WHERE JOB='PROGR'
\end{lstlisting}

Such mapping encodes also the ontology axioms that have been used to derive it. In particular, it encodes the path
\begin{center}
  \footnotesize
  \verb|okey-p[:Programmer,:Employee]|
\end{center}

Observe that, although the SQL query in \texttt{m3\_ax1quad} is contained in the SQL query in \texttt{m1quad}, the target parts of such mappings do not coincide. Therefore, the added mapping is \emph{not} redundant, and it will not be removed by the function \texttt{optimizeM}.

The input query gets rewritten into an equivalent query containing \emph{named graphs patterns}:

\begin{lstlisting}
SELECT ?eName ?dName ?p1 ?p2 ?p3 ?p4 WHERE {
   GRAPH ?p1 {?e rdf:type :Employee.}
   GRAPH ?p2 {?e ex:name  ?eName.}
   GRAPH ?p3 {?e ex:dept  ?d .}
   GRAPH ?p4 {?d ex:name dName.}
}
\end{lstlisting}

The algebra tree for such query is:

\begin{lstlisting}
  SELECT ?eName ?dName ?p1 ?p2 ?p3 ?p4
   JOIN
      quad(?e, rdf:type, :Employee, ?p1).
      quad(?e, ex:name, ?eName, ?p2).
      quad(?e, ex:dept, ?d, ?p3).
      quad(?d, ex:name, ?dName, ?p4).
\end{lstlisting}

The unfolding and optimizations proceed in the same way as for Example~\ref{ex:ontop}. Due to the presence of the mapping \texttt{m3\_ax1quad}, the final SQL query will contain a union:

\begin{lstlisting}
SELECT e.ENAME as eName, d.DNAME as dName,
       ':prov/mkey-m1/dkey-' || e.provsql as p1,
       ':prov/mkey-m2/dkey-' || e.provsql as p2
FROM EMP e, DPT d
WHERE e.DEPTNO = d.DEPTNO
UNION
SELECT e.ENAME as eName, d.DNAME as dName,
       ':prov/mkey-m3/okey-p[:Programmer,:Employee]/dkey-' || e.provsql as p1,
       ':prov/mkey-m2/dkey-' || e.provsql as p2
FROM EMP e, DPT d
WHERE e.DEPTNO = d.DEPTNO
   AND JOB='PROGR'
\end{lstlisting}
Such union keeps track of the fact that employees can either be derived by the mapping $m1$ alone, or by exploiting $m3$ together with the ontology axiom.
\end{example}

\fi

\end{document}

\endinput
